\newcommand{\intcone}{\operatorname{intcone}}
\newcommand{\rank}{\operatorname{Rank}}
\newcommand{\Mod}[1]{\mathrm{mod-}#1}
\definecolor{darkgreen}{rgb}{10,117,28}
\DeclareMathOperator{\lcm}{lcm}
\DeclareMathOperator{\conv}{conv}
\definecolor{blue}{rgb}{0.1,0.2,0.5}
\definecolor{brown}{rgb}{0.61,0.6,0.2}
\newtheorem{theorem}{Theorem}
\newtheorem*{theorem*}{Theorem}
\newtheorem{corollary}[theorem]{Corollary}
\newtheorem{proposition}[theorem]{Proposition}
\newtheorem{conjecture}{Conjecture}
\theoremstyle{remark}
\newtheorem{remark}{Remark}
\title{A parameterized linear formulation of the integer hull}%TODO Please add
\author{
  Friedrich Eisenbrand\thanks{EPFL, Switzerland, \texttt{friedrich.eisenbrand@epfl.ch}. Work carried out while the author was visiting the University of Washington.} \and
   Thomas Rothvoss \thanks{
 University of Washington, USA, 
  \texttt{rothvoss@uw.edu}. Supported by NSF grant 2318620 \emph{AF: SMALL: The Geometry of Integer Programming and Lattices}.}
}
\date{\today}
\begin{document}

\maketitle
% \begin{abstract}
%   \noindent 
%   Let $A ∈ ℤ^{m ×n}$ be an integer matrix whose components have absolute value bounded by $Δ$. We show that there exists a natural number $D ∈ ℕ$ and for each $r ∈ \{0,\dots,D-1\}^m$ integer matrices $B_r ∈ ℤ^{m'×n}$, $C_r ∈ ℤ^{m'×m}$ and  $f_r ∈  ℤ^{m'}$   with the following property:
%   \begin{quote}    
%     For each right-hand-side vector $b ∈ ℤ^m$ with $b ≡ r \pmod{D}$, the integer hull of  $P(b) = \{ x ∈  ℝ^n ： Ax≤ b\}$ is given  by 
% %  \begin{displaymath}
%        $P(b)_I =  \left\{x ∈ ℝ^n ： Bx ≤ f + Cb \right\}$.
%  %    \end{displaymath}
% \end{quote}
% The number $D$, as well as the $B_r,C_r$ and $f_r$ can be computed in time  bounded by a function depending on $n$ and $Δ$ only. 

%   This can be interpreted as an extension of a result of \citet{MR839604}, who showed that there  is a universal integer matrix $B $ such that, for each right-hand-side vector $b ∈ℤ^{m }$ there exists an integer  right-hand-side vector $t$ such that  $P(b)_I = \{ x ∈ ℝ^n： Bx ≤ t\}$. Our result  shows that this $t$ is \emph{linear} in $b$ as long as $b$ is of a certain remainder $b ≡r \mod{D} $. This makes the concept  applicable  via linear programming techniques. One application is, among others, the solution of an open problem posed by \cite{DBLP:conf/soda/CslovjecsekKLPP24} concerning the complexity of 2-stage-stochastic integer programming problems.

% A main ingredient of our method is  the  classical theory of \emph{Gomory-Chvàtal  cutting planes} and the \emph{elementary closure} of rational polyhedra. 
% \end{abstract} 

\begin{abstract}
  \noindent 
  Let $A ∈ ℤ^{m ×n}$ be an integer matrix with entries  bounded by $Δ$ in absolute value. Cook et al.~(1986) have shown that there exists a universal matrix $B ∈ ℤ^{m' ×n}$ with the following property:  For each  $b ∈ ℤ^m$, there exists a $t ∈ ℤ^{m'}$ such that the integer hull of the polyhedron $P = \{ x ∈ ℝ^n ： Ax ≤b\}$ is described by $P_I = \{ x ∈ ℝ^n ： Bx ≤t\}$. Our \emph{main result} is that $t$ is an \emph{affine} function of $b$ as long as $b$ is from a fixed equivalence class of the lattice $D ⋅ ℤ^m$. Here $D ∈ ℕ$ is a number that depends on $n$ and $Δ$ only. Furthermore,   $D$ as well as the matrix $B$ can be computed in time depending on $n$ and $\Delta$ only. 
  % This result makes the result of  \citet{MR839604}   applicable  via linear programming techniques. 
  An application of this result is the solution of an open problem posed by Cslovjecsek et al.~(SODA 2024) concerning the complexity of \emph{2-stage-stochastic integer programming} problems.  
  The main tool of our proof
  is  the  classical theory of \emph{Chv\'atal-Gomory cutting planes} and the \emph{elementary closure} of rational polyhedra. 
\end{abstract}

\section{Introduction}
\label{sec:introduction}

An \emph{integer program} is an optimization problem of the form
\begin{equation}
  \label{eq:8}
  \max\{ c^T x ： Ax≤b, \, x ∈ ℤ^n \},
\end{equation}
where $c ∈ ℤ^n$, $A ∈ℤ^{m × n}$ and $b ∈ ℤ^m$. Many optimization problems can be modeled and solved as an integer program, see, e.g.~\cite{NemhauserWolsey88,W98,conforti2014integer,Schrijver86}.
Unlike \emph{linear programming}, which can be solved in polynomial time \cite{Khachiyan79}, integer programming is known to be NP-complete~\cite{BoTr76}. 

An important special case %of integer programming
arises if the \emph{polyhedron} $P = \{x ∈ ℝ^n ： Ax ≤ b\}$ is \emph{integral}, i.e., if $P$ is equal to its \emph{integer hull} $P_I = \conv(ℤ^n ∩ P)$. Here $\conv(X)$ denotes the \emph{convex hull} of a set $X ⊆ ℝ^n$.   In this case, the integer program~\eqref{eq:8} can be solved in polynomial time with linear programming and integer linear algebra~\cite{Schrijver86}. For example,  if the matrix $A$ is \emph{totally unimodular}, i.e., if each sub-determinant of $A$ is  $0$ or $\pm 1$, then $P$ is integral for each right-hand-side vector $b ∈ ℤ^m$. Here, the notion  \emph{sub-determinant} of $A$ refers to the determinant of some square sub-matrix of $A$. Efficient algorithms for integer programs defined by matrices $A$ with a fixed sub-determinant bound is an active area of  research~\cite{artmann2017strongly,fiorini2022integer,aprile2024integer}.

\medskip

In the following we consider polyhedra defined by the matrix $A ∈ℤ^{m ×n}$  with varying right-hand-side $b ∈ℤ^m$ and denote the thereby \emph{parameterized} polyhedron by $P(b) = \{ x ∈ ℝ^n ： Ax ≤ b\}$. 
Cook, Gerards, Schrijver and Tardos~\cite{MR839604} have shown that there exists a universal integral matrix $M ∈ ℤ^{ m' ×n}$ that depends only on the matrix $A$ with the following property:
\begin{quote}
  For each right-hand-side $b ∈ℤ^m$ there exists a right-hand-side $t ∈ ℤ^{m'}$ with 
  \begin{equation}
    \label{eq:11}    
    P(b)_I = \{x ∈ ℝ^n ： Mx ≤ t\}.
      \end{equation}
\end{quote}
Furthermore, Cook et al.~\cite{MR839604} show that 
$  \|M\|_∞ ≤ n^{2n} s^n$. 
Here $s$ is an upper bound on the sub-determinants of $A$ and $ \|M\|_∞$ denotes the largest absolute value of an entry of $M$. % Recall that, if  $\|A\|_∞ =Δ$, then the Hadamard bound implies $δ ≤ n^{n/2} Δ^n$ and consequently
% \begin{equation}
%   \label{eq:10}
%   \|M\|_∞  ≤ n^{2n^2} Δ^{n^2}. 
% \end{equation}
% %

\subsection*{Contribution}
%\label{sec:main-contribution}

We prove that there is a choice\footnote{Our choice of the matrix $M$ is not  the same as in \cite{MR839604} --- in fact the upper bound on $\|M\|_{\infty}$ arising from our proof is triple-exponential while \cite{MR839604} provides a single-exponential bound.} for $M$ so that the right-hand-side $t ∈ℤ^{m'}$ in~\eqref{eq:11} depends \emph{linearly} 
on the right hand side $b ∈ℤ^m$ for all $b$  belonging to the same equivalence class of a certain lattice that depends on $n$ and $\Delta$ only where $\|A\|_{\infty} \leq \Delta$. We make the natural assumption that $A∈ℤ^{m ×n}$ does not have repeating rows. Recall that the number $m$ of rows is bounded by $m≤(2 Δ +1)^n$ in this case.  %More precisely,
Our \emph{main contribution} is the following. 
\begin{theorem}
  \label{thr:main}
  Let $A ∈ℤ^{m ×n}$ with non-repeating rows and  $\|A\|_∞ ≤Δ$. There exists a $D ∈ ℕ$, and matrices $B ∈ℤ^{m'×n}$ and $C ∈ℤ^{m'×m}$  depending on $A$ such that, for each $r ∈ \{0,\dots,D-1\}^m$, there exist an $f_r∈ℤ^{m'}$ 
  such that the following holds:
  \begin{quote}    
  For each $b \in \mathbb{Z}^m$ with $b - r ∈ {D ⋅ ℤ^m}$, one has
  \begin{equation}
    \label{eq:1}
    P(b)_I = \left\{x ∈ ℝ^n ： Bx ≤ f_r + Cb \right\}.
  \end{equation}
\end{quote}
Furthermore, the number $D∈ ℕ$, the matrices $B$ and $C$, as well as the vector $f_r$ can % , for each $r  ∈ \{0,\dots,D-1\}^m$, 
be computed in time depending on $n$ and $\Delta$ only\footnote{That means the algorithm does not depend on $b$.}. In fact, $D \leq 2^{n^{(n \Delta)^{O(n^2)}}}$ suffices. 
\end{theorem}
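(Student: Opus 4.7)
The plan is to obtain $P(b)_I$ by iterating the Chv\'atal--Gomory closure operator a bounded number of times, and to show inductively that every intermediate closure admits a description whose right-hand side is an affine function of $b$ on each residue class of a suitable lattice $D_k \cdot \mathbb{Z}^m$. The first ingredient is a bound on the Chv\'atal rank: for every rational polyhedron in $\mathbb{R}^n$ defined by integer coefficients of magnitude at most $\Delta$, there exists $K = K(n,\Delta)$, independent of the right-hand side, such that $K$ rounds of the elementary closure operator suffice to produce the integer hull. This reduces the task to analyzing a single round of the closure.

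The core of the argument is a single-round lemma of the following form. Assume the $k$-th closure $P(b)^{(k)}$ can be written as $\{x : B_k x \le \phi_k(b)\}$, where $B_k \in \mathbb{Z}^{m_k \times n}$ has entries of magnitude at most $\Delta_k$, and where $\phi_k(b) = G_k b + h_k^{(r)}$ is affine in $b$ on every residue class $r$ modulo $D_k$. Every Chv\'atal--Gomory inequality valid for $P(b)^{(k)}$ has the form
\begin{equation*}
  c^T x \le \lfloor \lambda^T \phi_k(b) \rfloor, \qquad \lambda \ge 0, \quad c = \lambda^T B_k \in \mathbb{Z}^n.
\end{equation*}
For each integer vector $c$ whose magnitude is bounded by the standard facet-complexity estimate for the elementary closure (a function of $n$ and $\Delta_k$), it suffices to let $\lambda$ range over the vertices of the multiplier polyhedron $\{\lambda \ge 0 : \lambda^T B_k = c\}$. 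The denominators of these vertices are bounded by sub-determinants of $B_k$, hence by $(\sqrt n\, \Delta_k)^n$ via Hadamard's inequality. Setting $D_{k+1}$ to be a common multiple of $D_k$ and all such denominators, one has, for $b \equiv r \pmod{D_{k+1}}$, that $\lambda^T G_k (b - r) \in \mathbb{Z}$; consequently $\{\lambda^T \phi_k(b)\} = \{\lambda^T \phi_k(r)\}$ depends only on $r$, and
\begin{equation*}
  \lfloor \lambda^T \phi_k(b) \rfloor \;=\; \lambda^T \phi_k(b) - \{\lambda^T \phi_k(r)\}
\end{equation*}
is affine in $b$ on the residue class of $r$. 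Including one row for every relevant pair $(c,\lambda)$ yields a representation of $P(b)^{(k+1)}$ of the required form, with new matrix $B_{k+1}$ of entries $\Delta_{k+1} \le \Delta_k^{O(n)}$ and new period $D_{k+1}$.

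Iterating the single-round step $K(n,\Delta)$ times starting from $B_0 = A$, $\phi_0(b) = b$, $D_0 = 1$, one arrives at a description $P(b)_I = \{x : B_K x \le f_r + C b\}$, where $C$ collects the linear-in-$b$ parts $\lambda^T G_k$ and $f_r$ collects the residue-dependent constants $\lambda^T h_k^{(r)} - \{\lambda^T \phi_k(r)\}$ across all rounds. Since $\Delta_{k+1} \le \Delta_k^{O(n)}$ and $D_{k+1} \le D_k \cdot (\sqrt n\, \Delta_k)^{O(n)}$, compounding across $K(n,\Delta)$ rounds produces the triple-exponential bound $D \le 2^{n^{(n\Delta)^{O(n^2)}}}$ claimed in the statement, and the entire construction is computable from $A$ alone.

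The principal obstacle is the single-round lemma, specifically keeping the representation uniform in $b$: the left-hand-side matrix must be the \emph{same} for all right-hand sides, and the corresponding right-hand side must be genuinely affine (not merely piecewise-affine) in $b$ on each residue class. The natural idea of taking, for every facet normal $c$, the tightest cut $\min_\lambda \lfloor \lambda^T \phi_k(b) \rfloor$ produces a piecewise-affine dependence on $b$ and must be avoided; the fix is to \emph{over-describe} the closure by keeping one row per vertex multiplier $\lambda$, which introduces redundant inequalities but preserves affinity. Analogous care is needed in degenerate cases --- for instance when $P(b) = \emptyset$, where infeasibility is certified by $\lambda \ge 0$ with $\lambda^T A = 0$ and $\lambda^T b < 0$; the corresponding Farkas-type rows must be included as affinely parameterized inequalities, so that the final system $Bx \le f_r + C b$ also captures emptiness for the appropriate values of $b$.
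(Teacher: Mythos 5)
Your proposal follows essentially the same route as the paper: bound the Chv\'atal rank by a function of $n$ and $\Delta$, then prove a single-round lemma by induction showing that each closure $P(b)^{(k)}$ admits a fixed left-hand-side matrix and a right-hand side affine in $b$ on each residue class, using vertex multipliers with sub-determinant-bounded denominators and keeping one (possibly redundant) row per multiplier $\lambda$ rather than the $b$-dependent tightest one. The only slip is a wording issue: to conclude $\lambda^T G_k(b-r)\in\mathbb{Z}$ you need $D_{k+1}$ to be a multiple of the \emph{product} of $D_k$ (the denominator of $G_k$) and the denominators of $\lambda$, not merely a common multiple of $D_k$ and those denominators --- your own recursion $D_{k+1}\le D_k\cdot(\sqrt n\,\Delta_k)^{O(n)}$ and the paper's choice of modulus $D^{k+1}$ both reflect the correct intent.
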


\begin{remark}
  \label{rem:3}  
  The condition that $A ∈ ℤ^{m ×n}$ has non-repeating rows can also be dropped. In this case  % $B$ and $C$, as well as the vector $f_r$, as well as
  the running time depends on $Δ$, $n$, and $m$. 
\end{remark}

\subsubsection*{Applications}
\label{sec:applications}

 We describe three almost immediate applications of Theorem~\ref{thr:main}.

% \begin{enumerate}[i),leftmargin=*]
\noindent 
i)  A \emph{2-stage stochastic} integer programming problem is an integer program~\eqref{eq:8} of the form  $\max c^Tx + \sum_{i=1}^n d_i^Ty_i$ subject to
  \begin{displaymath}
    U_i x + V_i y_i = b_i, \, i=1,\dots,n  \, \text{ and }\, x,y_1,\dots,y_n ∈ ℤ_{≥0}^k. 
  \end{displaymath}
 % where  $U_i,V_i ∈ ℤ^{k×k}$ and $b_i ∈ ℤ^k$ for each $i$. %This problem has attracted significant attention in the recent literature.
  If $Δ$ denotes the largest absolute value of an entry of the integral constraint matrices $U_i,V_i ∈ ℤ^{k×k}$, then the problem can be solved in time $f(k,Δ)$-times a polynomial in the input encoding~\cite{hemmecke2003decomposition,CslovjecsekEPVW-ESA21,ComplexityOf2StageStochIPMathProgB}. %\cite{hemmecke2003decomposition,ComplexityOf2StageStochIPMathProgB,cslovjecsek2021block}.
%  Here $f$ is some computable function.
See also~\cite{jansen2023double} for a doubly exponential lower bound on $f$, assuming the Exponential Time Hypothesis (ETH).  Recently Cslovjecsek et al.~\cite{DBLP:conf/soda/CslovjecsekKLPP24}  have proved that the \emph{integer feasibility problem} of 2-stage stochastic integer programming can be solved in time $f(k,Δ)$-times a polynomial in the input encoding. This time,  $Δ$ is a bound on the largest absolute value of the matrices $V_i$ \emph{only}. The authors pose the open problem  whether such a complexity bound follows for the full problem  (optimization version) as well. We answer this question in the affirmative in Section~\ref{sec:optimizing-over-2}.

\medskip

\noindent
ii) Cslovjecsek et al.~\cite{DBLP:conf/soda/CslovjecsekKLPP24} provide an important structural result on the \emph{convexity of integer cones}. Recall that the columns of a matrix $W ∈ ℤ^{m ×n}$ generate a \emph{cone} denoted by
$\textrm{cone}(W) = \{ Wx : x \in \mathbb{R}_{\geq 0}^n \}$, an \emph{integer cone} $\intcone(W) = \{ Wx : x \in \mathbb{Z}_{\geq 0}^n \}$ and a \emph{lattice} $\Lambda(W) = \{ Wx : x \in \mathbb{Z}^n\}$.
Clearly $\intcone(W) \subseteq \textrm{cone}(W) \cap \Lambda(W)$. The reverse however is not true in general\footnote{Equality even fails in dimension $m=1$. For example for $W = (2,3)$ one has $\textrm{cone}(W)=[0,\infty)$, $\Lambda(W) = \mathbb{Z}$ while $1 \notin \intcone(W)$.}.
However, \cite{DBLP:conf/soda/CslovjecsekKLPP24} prove that there is a $D \in \mathbb{N}$ depending on $m$ and $\|W\|_{\infty}$ so that for the integer cone intersected with any shift of the lattice $Λ = D ⋅ ℤ^m$, convexity holds.
% In a nutshell, their result shows that the elements of an \emph{integer cone} generated by a matrix  are partitioned by a finite number of shifts of a lattice $Λ = D ⋅ ℤ^m$.
More precisely, for each $r ∈ \{0,\dots,D-1\}^m$ there exists a polyhedron $Q_r ⊆ ℝ^m$ such that
\begin{displaymath}
  (r + Λ) ∩ \intcone(W)  = (r+Λ) ∩ Q_r. 
\end{displaymath}
Their original proof is fairly involved. It is implied by our main theorem very quickly as we lay out in Section~\ref{sec:conv-integ-cones}.

\medskip

\noindent
iii) Finally, we turn our attention to \emph{4-block} integer programming problems for which there only exist polynomial-time algorithms, where the degree of the polynomial depends on the parameters $k$ and $Δ$~\cite{DBLP:conf/ipco/HemmeckeKW10}. 
We prove that in time $f(k,\Delta)$ times a polynomial of fixed degree one can at least find an \emph{almost} feasible solution that violates the $k$ many \emph{linking constraints} by an \emph{additive} constant depending on $k$ and $Δ$. The details are in Section~\ref{sec:an-almost-solution}.

\subsubsection*{Proof idea}

{The proof} of Theorem~\ref{thr:main} is based on the classical  theory of \emph{Chv\'atal-Gomory} cutting planes and the  \emph{elementary closure} of polyhedra. We will review this  theory in Section~\ref{sec:gomory-cutt-plan} at a level of detail that is necessary for us here. The reader shall also be referred to the seminal textbooks~\cite{Schrijver86,conforti2014integer}.

\medskip 
However, the main idea can be easily described by resorting to the following main principles of the theory of cutting planes: 
The \emph{elementary closure} of a rational polyhedron $P$ is  again a rational polyhedron $P'$ that satisfies
$
  P_I ⊆ P' ⊆ P 
  $. While given a point $x$, testing if $x \in P'$ is coNP-complete~\cite{1999Eisenbrand99}, the computation of $P'$ is fixed-parameter-tractable in $n$ and $\Delta$. 
Furthermore,   if $P^{(i)}$ denotes the result of $i$~successive  applications of the closure operator, then the \emph{Chv\'atal rank} of $P$ is the smallest $i∈ ℕ$  with $P^{(i)} = P_I$. It is always finite and, for us very important, it is bounded by a function on $n$ and $Δ$~\cite{MR839604}. 
Together, this indicates that, in order  to prove our main result, it is enough to show some analogous statement for  $P^{(i)}$ instead of proving it directly for $P_I$. This is done in Theorem~\ref{thr:2} and constitutes the heart of the proof.

%\end{enumerate}

%

% \noindent
% It will be convinient to also have the same structural result for polyhedra in standard equation form:
% \begin{theorem}
%   \label{thm:mainII}
% Let $A ∈ℤ^{m ×n}$ with $\|A\|_∞ ≤Δ$ and let $P(b) := \{ x \in \mathbb{R}^n \mid Ax = b, x \geq \bm{0}\}$. There exists a $D ∈ ℕ$ depending on $n$ and $Δ$ such that, for each $r ∈ \{0,\dots,D-1\}^m$, there exist $B ∈ℤ^{m'×n}$, $C ∈ℤ^{m'×m}$ and   $f∈ℤ^{m'}$ such that, for each $b \in \mathbb{Z}^m$ with $b ≡ r \pmod{D}$, one has
%   \begin{equation}
%     \label{eq:1}
%     P(b)_I = \left\{x ∈ ℝ^n ： Bx ≤ f + Cb \right\}.
%   \end{equation}
% \end{theorem}

% \subsubsection*{Related work}
% \label{sec:related-work}

% Cutting planes are a fundamental paradigm in integer optimization with a substantial body of work. We would like to mention some 
\section{Chv\'atal-Gomory cutting planes}
\label{sec:gomory-cutt-plan}

%\todo{Citations}Citations to be worked in: \cite{benchetrit2018characterizing}

Let $P = \{x ∈ ℝ^n ： Ax ≤b\}$ be a polyhedron where $A ∈ ℝ^{m ×n}$, $b ∈ ℝ^m$. For any $c∈ ℤ^n$ and $δ ≥ \max\{c^Tx ： Ax≤b\}$,  the inequality $c^Tx ≤ ⌊δ⌋$ is called a \emph{Chv\'atal-Gomory cutting plane}~\cite{Gomory58,Chvatal73}, or briefly \emph{cutting plane} of $P$.
It is valid for all integer points $x ∈  P ∩ ℤ^n$ and therefore also for the integer hull $P_I$ of $P$.
\begin{figure}[h] 
  {
    \resizebox{12cm}{!}{%
      \centering
    \input{./Ipe/Cutting-Plane.tex}
  }
}
\caption{The valid inequality $-x_1+x_2 ≤ δ$ yields the cutting plane $-x_1+x_2 ≤ ⌊δ ⌋ $.}
  \label{fig:1}
\end{figure}

\noindent
For the ease of notation we will also write $c^Tx \leq \delta$ for the set of points $\{ x \in \mathbb{R}^n : c^Tx \leq \delta\}$.
The intersection of all cutting planes of $P$ is called the \emph{elementary closure} and it is denoted by
\begin{equation}
  \label{eq:2}  
  P' = \bigcap_{\substack{ (c^Tx ≤δ) ⊇ P \\ c ∈ℤ^n}}\big( c^Tx ≤ ⌊δ⌋ \big). 
\end{equation}
A cutting plane $c^Tx ≤⌊δ⌋$ follows from a simple inference rule that is based on linear programming duality.  Assuming that $P \neq \emptyset$, duality implies that there exists a $λ ∈ ℝ^m_{≥0}$ such that
\begin{enumerate}[i)]
\item $λ^T A = c^T$ and
\item $λ^T b ≤ δ$ hold.   
\end{enumerate}
See e.g. Cor 7.1h in Schrijver~\cite{Schrijver86}.
Therefore, the elementary closure can be described as 
\begin{equation}
  \label{eq:3}  
  P' = \bigcap_{\substack{  λ ∈ ℝ^m_{≥0} \\ λ^T A ∈ℤ^n}} \big( (λ^T A) x ≤ ⌊λ^Tb⌋ \big).
\end{equation}
If the polyhedron $P$ is rational, then the matrix $A$ in the representation $Ax ≤ b$, as well as $b$  can be chosen to be integral, i.e., $A ∈ℤ^{m ×n}$, $b ∈ ℤ^m$. If this is the case, then $P' ⊆P$ and  $P'$ is a rational  polyhedron as well~\cite{Schrijver80}. We repeat the argument below and provide a useful bound on the $ℓ_∞$-norm of the left-hand-side vector $c$ of  a non-redundant cutting plane $c^Tx ≤⌊δ$⌋. 
\begin{theorem}[Schrijver~\cite{Schrijver80}] \label{thm:Schrijver1980}
  Let $P = \{x ∈ ℝ^n ： Ax≤b\}$ be a polyhedron with $A ∈ ℤ^{m ×n}$, $\|A\|_∞≤Δ$ and $b ∈ ℤ^m$.  
  Then %\todo{Doesn't one have to add a $\cap P$ here since the original constraints are needed as well and $\lambda = e_i$ is not permitted.}
   \begin{displaymath}
  P' = \bigcap_{\substack{  λ ∈ [0,1]^m \\ λ^T A ∈ℤ^n} } \big( (λ^T A) x ≤ ⌊λ^Tb⌋ \big). 
\end{displaymath}
In particular, in~\eqref{eq:2},  $P'$  is described by all cutting planes $c^Tx ≤ ⌊δ⌋$ with $c ∈ ℤ^n$ and $\|c\|_∞≤ n ⋅ Δ$. 
\end{theorem}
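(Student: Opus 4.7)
The plan is to prove the two assertions in sequence. The first restricts the multipliers $\lambda$ in~\eqref{eq:3} to the cube $[0,1]^m$, and the second reduces the support of these multipliers to at most $n$, from which the coefficient bound $\|c\|_\infty \leq n\Delta$ follows.

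For the first reduction, given an arbitrary $\lambda \in \mathbb{R}^m_{\geq 0}$ with $\lambda^T A \in \mathbb{Z}^n$, I would decompose $\lambda = \lfloor\lambda\rfloor + \{\lambda\}$. The inequality $\lfloor\lambda\rfloor^T A x \leq \lfloor\lambda\rfloor^T b$ is a non-negative integer combination of rows of $Ax \leq b$ and hence already valid for $P$ (no rounding is required). Since $\lfloor\lambda\rfloor^T A \in \mathbb{Z}^n$, also $\{\lambda\}^T A = \lambda^T A - \lfloor\lambda\rfloor^T A \in \mathbb{Z}^n$, so the cutting plane $\{\lambda\}^T A x \leq \lfloor\{\lambda\}^T b\rfloor$ is well-defined. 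Adding the two inequalities and using $\lfloor\lambda\rfloor^T b \in \mathbb{Z}$ to rewrite $\lfloor\lambda\rfloor^T b + \lfloor\{\lambda\}^T b\rfloor$ as $\lfloor\lambda^T b\rfloor$ exactly recovers the original cutting plane. Hence multipliers in $[0,1)^m$ suffice, and taking closures yields $[0,1]^m$.

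For the second assertion, given $\lambda \in [0,1]^m$ with $c := \lambda^T A \in \mathbb{Z}^n$, I would consider the polytope $Q = \{\mu \in [0,1]^m : \mu^T A = c\}$, which contains $\lambda$, and pick a vertex $\mu^*$ minimizing $\mu^T b$ over $Q$. Then $\mu^{*T} b \leq \lambda^T b$, so the cutting plane associated to $\mu^*$ is at least as strong as the one associated to $\lambda$. At a vertex of $Q$, at most $n$ equalities $\mu^T A = c$ are linearly independent, so at least $m - n$ of the $2m$ box constraints must be tight; equivalently, at most $n$ coordinates of $\mu^*$ lie strictly in $(0,1)$. Now reapplying the decomposition from the first reduction to $\mu^*$ expresses its cutting plane as the sum of a $P$-valid inequality (from $\lfloor\mu^*\rfloor$) and the cutting plane with multiplier $\{\mu^*\}$. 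Since $\{\mu^*\} \in [0,1)^m$ has support of size at most $n$, its left-hand side $\{\mu^*\}^T A$ has $\ell_\infty$-norm at most $n\Delta$, as desired.

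The heart of the argument is this marriage of the integer-part trick with vertex selection in a box polytope; I do not foresee a serious technical obstacle. The only subtlety is to verify that the $P$-valid inequalities produced along the way are already implied by $Ax \leq b$ (which they are, being non-negative integer combinations of its rows), so that only the small-coefficient cutting planes contribute essentially new constraints in the final description of $P'$.
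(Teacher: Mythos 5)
Your proposal is correct and uses the same two ingredients as the paper's proof: the fractional-part decomposition $\lambda = \lfloor\lambda\rfloor + (\lambda - \lfloor\lambda\rfloor)$ to reduce to multipliers in $[0,1]^m$ (with $Ax\leq b$ absorbing the integer part), and an LP vertex argument to obtain a multiplier whose fractional part has support at most $n$, whence $\|c\|_\infty\leq n\Delta$. The only immaterial difference is that you take a vertex of the box polytope $\{\mu\in[0,1]^m : \mu^TA=c^T\}$ minimizing $\mu^Tb$, whereas the paper takes an optimal vertex of $\min\{\lambda^Tb : \lambda^TA=c^T,\ \lambda\geq 0\}$ over the nonnegative orthant and then passes to its fractional part.
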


\begin{proof}
  Consider a cutting plane
  $c^T x ≤ ⌊δ⌋$ derived as 
  $(λ^TA )x ≤⌊λ^T b⌋$ with $λ ∈ ℝ_{≥0}^m$, $λ^T A = c^T$ and $λ^T b ≤δ$. 
  The inequality $(λ- ⌊λ⌋)^T A x ≤ ⌊(λ- ⌊λ⌋)^T b ⌋$ is again a cutting plane, since $(λ- ⌊λ⌋)^T A ∈ ℤ^n$. We have
  \begin{displaymath}
    λ^T A x = (λ- ⌊λ⌋)^T A x +  ⌊λ⌋^T A x ≤ ⌊(λ- ⌊λ⌋)^T b ⌋ + ⌊λ⌋^T b = ⌊λ^Tb⌋. 
  \end{displaymath}
  This implies that the inequality $(λ^TA )x ≤⌊λ^T b⌋$ is linearly implied by the system $Ax≤b$ and the inequality $(λ- ⌊λ⌋)^T A x ≤ ⌊(λ- ⌊λ⌋)^T b ⌋$.

  Furthermore, we can assume that $λ$ is an optimal  vertex solution to the linear program
\begin{equation}
  \label{eq:LPforLambda}
    \min\{ λ^T b ： λ^T A = c^T, \, λ ∈ ℝ^m_{≥0} \}, 
  \end{equation}
  which implies that $λ$ has only $n$ nonzero entries. From there, it follows that
  \begin{displaymath}
    \|(λ- ⌊λ⌋)^T A\|_∞ ≤ n ⋅Δ. 
  \end{displaymath}
\end{proof}

\noindent
By applying the closure operator $i$-times successively, one obtains $P^{(i)}$, 
 the \emph{$i$-th elementary closure} of  $P⊆ ℝ^n$. The next corollary is immediate. 
\begin{corollary}
  \label{co:1}
  Let $P = \{x ∈ ℝ^n ： Ax≤b\}$ be a rational polyhedron with $A ∈ ℤ^{m ×n}$, $\|A\|_∞≤Δ$ and $b ∈ ℤ^m$.  Let $i ∈ \mathbb{Z}_{\geq 0}$. 
  Then  $P^{(i)}$ is described by a system of inequalities
  \begin{displaymath}
    P^{(i)} = \{ x ∈ ℝ^n ：Cx ≤ d\},
  \end{displaymath}
  $C∈ ℤ^{m'×n}$, $d ∈ ℤ^{m'}$ with $\|C\|_∞ ≤ n^i Δ$. 
\end{corollary}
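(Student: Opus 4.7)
The plan is to prove this by straightforward induction on $i \geq 0$, using Theorem~\ref{thm:Schrijver1980} as the workhorse at each step. The statement is really just iterated application of the bound $\|c\|_\infty \le n \Delta$ that the theorem provides for a single closure operation, and the combinatorial fact that the floor of a rational number is an integer (so the right-hand-sides of cutting planes are automatically integral).

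For the base case $i=0$ there is nothing to prove, since $P^{(0)} = P$ is by hypothesis described by $Ax \le b$ with $A \in \mathbb{Z}^{m \times n}$, $b \in \mathbb{Z}^m$, and $\|A\|_\infty \le \Delta = n^0 \Delta$. For the inductive step, I would assume that $P^{(i-1)}$ admits a description $A^{(i-1)} x \le b^{(i-1)}$ with $A^{(i-1)} \in \mathbb{Z}^{m_{i-1} \times n}$, $b^{(i-1)} \in \mathbb{Z}^{m_{i-1}}$ and $\|A^{(i-1)}\|_\infty \le n^{i-1} \Delta$. Applying Theorem~\ref{thm:Schrijver1980} to this description of $P^{(i-1)}$, the polyhedron $P^{(i)} = (P^{(i-1)})'$ is the intersection of all cutting planes $c^T x \le \lfloor \delta \rfloor$ with $c \in \mathbb{Z}^n$ and $\|c\|_\infty \le n \cdot (n^{i-1} \Delta) = n^i \Delta$. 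Collecting these inequalities into a matrix $C$ and vector $d$, the entries of $C$ are integral and bounded in absolute value by $n^i \Delta$, and the entries of $d$ are integral since each $\lfloor \delta \rfloor \in \mathbb{Z}$. This gives exactly the required description of $P^{(i)}$.

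There is essentially no obstacle here: the only small point worth recording explicitly is that one really is allowed to apply Theorem~\ref{thm:Schrijver1980} to $P^{(i-1)}$, which requires an integral description of that polyhedron, and this is precisely what the inductive hypothesis supplies. The bound $n^i \Delta$ is then the cleanest possible consequence of the fact that each closure multiplies the coefficient bound by a factor of $n$. No new ideas are needed beyond Theorem~\ref{thm:Schrijver1980}.
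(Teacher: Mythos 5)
Your proof is correct and is exactly the argument the paper has in mind: the paper simply declares the corollary ``immediate'' after Theorem~\ref{thm:Schrijver1980}, and your induction (each closure multiplies the coefficient bound by $n$, right-hand sides stay integral because they are floors) is the canonical way to make that precise. No discrepancy with the paper's approach.
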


\subsubsection*{The rank of rational polyhedra}
\label{sec:rank-rati-polyh}

The \emph{Chv\'atal rank}~\cite{Schrijver80} of  a polyhedron $P$ is the smallest natural number $i ∈ \mathbb{Z}_{\geq 0}$ such that $P^{(i)} = P_I$. Schrijver~\cite{Schrijver80} has shown that the Chv\'atal rank of a rational polyhedron is always finite. % Schrijver's proof begins by showing the following fundamental property of the closure operator. Let $F ⊆P$ be a \emph{face} of $P$, i.e., a set $F = (c^T x = δ) ∩ P$, where $c^T x ≤ δ$ is a valid inequality of $P$. The face $F$ is a polyhedron again for which one can apply the closure operator. One has 
% \begin{equation}
%   \label{eq:9}
%   F' = P' ∩ F.
% \end{equation}
% The finiteness of the rank of a rational polyhedron can then be proved by induction on the affine dimension of the polyhedron, see Figure~\ref{fig:2}. Consider an inequality $a^Tx ≤β$, $a ∈ℤ^n$ that is valid for the integer hull of a full-dimensional polyhedron $P$ and consider the   $F = \{ x ∈ P： a^T x = α\}$, where $α = \max\{ a^T x ： x ∈P\}$. One can assume that $α$ is integral, otherwise, $a^Tx ≤⌊α⌋$ is valid for $P'$. By induction, $F$ has finite Chvàtal rank which means that, after a finite number of rounds, $a^Tx ≤ α-1$ becomes valid. 

% \begin{figure}[h]
%   \centering
% %  \resizebox{12cm}{!}{%
%     {
%       \input{./Ipe/Procedure.tex}
%     }
%  %   }
%   \caption{An illustration of Schrijver's proof on the finiteness of the rank of rational polyhedra.}
%   \label{fig:2}
% \end{figure}

Cook et al.~\cite{CookCoullardTuran87} showed that the Chv\'atal rank of integer empty polyhedra is bounded by a function on the dimension. More precisely, they show that, the rank of $P ⊆ ℝ^n$ with $P ∩ ℤ^n = ∅$ is bounded by a function $g(n)$ that satisfies the recursion 
\begin{equation}
  \label{eq:12}
  g(n) ≤ ω(n) ( 1 + g(n-1)) +1, 
\end{equation}
where $ω(n)$ is the \emph{flatness constant} in dimension $n$. The best known bound on $ω(n)$ is $O(n \log^3(n))$, see~\cite{reis2023subspace}, which gives a bound of $g(n) \leq n^{O(n)}$. % This result is proved by induction again, this time letting $a ∈ ℤ^n \setminus \{0\}$ be such that $a^T(x - x') ≤ ω(n)$ for all $x,x' ∈P$. 

% Schrijver~\cite{Schrijver80} proved that the rank of a rational polyhedron is finite. We next argue that  the \emph{Chv\'atal-rank} of $P$ is bounded by a function, depending on $Δ$ and the dimension $n$.
% The proof of this theorem follows from \cite[Theorem~17.4]{Schrijver86} \cite[Theorem~17.2]{Schrijver86}, \cite[Theorem~23.3]{Schrijver86}  and \cite[Lemma in the proof of Theorem~23.1]{Schrijver86}. The original papers are
% \cite{CookCoullardTuran87} \cite{Chvatal73} \cite{Schrijver80} \cite{MR839604} 

Most relevant for us is another result of Cook et al.~\cite{MR839604} that states that the rank can be bounded by a function of $n$ and $\Delta$. It follows from a proximity theorem, the bound on $g(n)$~\eqref{eq:12} and the bound on the $ℓ_∞$-norm of the facet normal-vectors of $P_I$, see also~\cite[Theorem~23.4]{Schrijver86}. More precisely, they show that the rank of $P(b)$ is bounded by
\begin{equation}
  \label{eq:13}  
  \max \left\{ g(n), n^{2n +2} s^{n+1} +1 + n^{2n +2} s^{n+1} g(n-1)\right\} 
\end{equation}
where $g(n)$ is as in~\eqref{eq:12} and $s$ is an upper bound on the largest sub-determinant of $A$.  With the Hadamard bound on $s$ one can see that there exists a constant $c ∈ ℕ$ such that~\eqref{eq:13} is bounded from above by $(n ⋅Δ)^{c ⋅n^2}$. We summarize this in the following theorem.  

\begin{theorem}[Cook, Gerards, Schrijver and Tardos~\cite{MR839604}]
  \label{thr:1}
  Let $P = \{x ∈ ℝ^n ： Ax ≤b\}$ be a rational polyhedron with $A ∈ ℤ^{m ×n}$, $\|A\|_∞ ≤ Δ$ and $b ∈ ℤ^m$. There exists a function $\rank(n,Δ)$ such that
  \begin{displaymath}
    P^{(i)} = P_I
  \end{displaymath} 
  for each $i≥ \rank(n,Δ)$. Moreover, there exists a constant $c ∈ ℕ$ such that %the function $\rank(Δ,n)$ satisfies the bound
  \begin{equation}
    \label{eq:14}
    \rank(n,Δ) ≤ (n ⋅ Δ)^{c ⋅ n^2}. 
  \end{equation}
\end{theorem}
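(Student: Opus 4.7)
The strategy follows the classical argument of Cook, Gerards, Schrijver and Tardos~\cite{MR839604}: derive each facet-defining inequality of $P_I$ in a bounded number of Chv\'atal rounds. First dispose of the easy case: if $P_I = \emptyset$, apply Cook--Coullard--Tur\'an~\cite{CookCoullardTuran87} directly to obtain rank at most $g(n) \leq n^{O(n)}$ (using $\omega(n) = O(n \log^3 n)$~\cite{reis2023subspace} in the recursion \eqref{eq:12}), and we are done. So assume $P_I \neq \emptyset$ and fix any facet-defining inequality $c^T x \leq \delta$ of $P_I$ with $c \in \mathbb{Z}^n$ primitive; a Cramer's-rule argument combined with the Hadamard bound gives $\|c\|_\infty \leq s \leq (\sqrt n\, \Delta)^n$, where $s$ is the largest absolute sub-determinant of $A$.

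The proximity theorem of~\cite{MR839604} (see also \cite[Thm.~17.2]{Schrijver86}) then yields $\max\{c^T x : x \in P\} \leq \delta + M$ with $M := n^{2n+2} s^{n+1}$, because every vertex of $P$ lies within $\ell_\infty$-distance $n^{2n+2}s^{n+1}$ of some vertex of $P_I$. Hence $P \subseteq \{c^T x \leq \delta + M\}$ holds trivially, and the task reduces to chipping the bound from $\delta + M$ down to $\delta$ one unit at a time.

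The core of the proof is a \emph{peeling lemma}: if $P^{(i)} \subseteq \{c^T x \leq \delta + k\}$ for some integer $k \geq 1$, then $P^{(i + g(n-1) + 1)} \subseteq \{c^T x \leq \delta + k - 1\}$. Indeed, set $F_k := P^{(i)} \cap \{c^T x = \delta + k\}$. Since $P_I \subseteq \{c^T x \leq \delta\}$, the face $F_k$ contains no integer points. Using a unimodular change of coordinates (valid because $c$ is primitive) that identifies the hyperplane $\{c^T x = \delta + k\}$ with $\mathbb{R}^{n-1}$ and its integer points with $\mathbb{Z}^{n-1}$, $F_k$ becomes an integer-empty rational polyhedron of dimension at most $n-1$; by~\cite{CookCoullardTuran87} its Chv\'atal rank inside the hyperplane is at most $g(n-1)$. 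The standard face-closure principle~\cite[Thm.~23.2]{Schrijver86} lifts each of those $(n-1)$-dimensional cutting planes to a cutting plane of $P^{(i)}$, so $P^{(i + g(n-1))} \cap \{c^T x = \delta + k\} = \emptyset$. By LP duality this means $\max\{c^T x : x \in P^{(i + g(n-1))}\} < \delta + k$, and one more Chv\'atal round rounds this down to $\delta + k - 1$.

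Iterating the peeling lemma $M$ times, performed in parallel for every facet of $P_I$, we conclude $\rank(n,\Delta) \leq M(g(n-1)+1) + 1$, which is exactly the bound~\eqref{eq:13}. Substituting $s \leq (\sqrt n\, \Delta)^n$ (so $M \leq (n\Delta)^{O(n^2)}$) together with $g(n-1) \leq n^{O(n)}$ then yields $\rank(n,\Delta) \leq (n\Delta)^{c n^2}$ for an absolute constant $c$, as claimed. The principal obstacle is the peeling lemma, specifically justifying the face-closure lift: one must verify that a cutting plane inferred inside an $(n-1)$-dimensional slice corresponds to a single Chv\'atal round in ambient $\mathbb{R}^n$, which relies on the unimodular embedding of the slice's lattice into $\mathbb{Z}^n$ that is enabled by primitivity of $c$.
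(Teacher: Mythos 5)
Your proposal follows exactly the route the paper attributes to Cook, Gerards, Schrijver and Tardos (the paper itself only cites the result): bound the integrality gap in each facet direction of $P_I$ via proximity, then peel one unit per $g(n-1)+1$ rounds using the Cook--Coullard--Tur\'an bound on integer-empty faces, arriving at the bound~\eqref{eq:13}. One intermediate claim is wrong, though it does not damage the final estimate: the facet normals of $P_I$ satisfy $\|c\|_\infty \leq n^{2n}s^n$, not $\|c\|_\infty \leq s$, and the gap $M = n^{2n+2}s^{n+1}$ arises as the product of the proximity distance $n\cdot s$ with $\|c\|_1 \leq n\cdot n^{2n}s^n$ rather than being the proximity distance itself; since you use the correct value of $M$ in the iteration count, the conclusion $\rank(n,\Delta) \leq (n\Delta)^{c n^2}$ stands.
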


\section{Proof of the main theorem}
\label{sec:proof-main-theorem}

This section contains the proof of Theorem~\ref{thr:main}.
Throughout, we assume that  $A ∈ℤ^{m ×n}$ satisfies $\|A\|_∞ ≤Δ$. 
% Since the rank of $P(b) = \{x ∈ RR^n： Ax ≤ b\}$ is bounded by $(n ⋅ Δ)^{C ⋅n^2}$  for a suitable constant $C∈ ℕ_+$, i
The strategy is  to show an analogous statement to Theorem~\ref{thr:main} for the $i$-th elementary closure.
%Let us start with the first Chv\'atal-Gomory closure. 

\subsection{Warmup --- Linearity of the first elementary closure}
\label{sec:line-elem-clos}

% We now have all ingredients for the main technical proof. We first demonstrate that a property, analogous to Theorem~\ref{thr:main} holds for the first 
% Gomory-Chv\'atal closure of $P(b) = \{ x \in \mathbb{R}^n : Ax \leq b\}$. 

To warm up, we prove the linearity statement from Theorem~\ref{thr:main} for $P(b)'$ instead of $P(b)_I$.
While Theorem~\ref{thr:mainForFirstClosure} will not be needed to prove the main theorem, its statement and proof contain the main ideas that  can be described with simpler notation. 
\begin{theorem}
  \label{thr:mainForFirstClosure}
  Let $A ∈ℤ^{m ×n}$ with non-repeating rows and  $\|A\|_∞ ≤Δ$. There exists a $D ∈ ℕ$, $B ∈ℤ^{m'×n}$ and $C ∈ℤ^{m'×m}$ such that, for each $r ∈ \{0,\dots,D-1\}^m$, there exist an $f_r∈ℤ^{m'}$ 
  such that the following holds:
  \begin{quote}    
  For each $b \in \mathbb{Z}^m$ with $b - r ∈ {D ⋅ ℤ^m}$, one has
  \begin{equation}
    \label{eq:1}
    P(b)' = \left\{x ∈ ℝ^n ： Bx ≤ f_r + Cb \right\}.
  \end{equation}
\end{quote}
\end{theorem}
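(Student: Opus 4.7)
The plan is to take the description of $P(b)'$ furnished by Theorem~\ref{thm:Schrijver1980}, enumerate its cuts in a $b$-independent way, and then absorb the only non-linearity --- the floor function --- into a constant that depends solely on the residue class $r$. Concretely, I would enumerate the integer vectors $c ∈ ℤ^n$ with $\|c\|_∞ ≤ n Δ$ (Theorem~\ref{thm:Schrijver1980} ensures these suffice) and, for each such $c$, consider the polyhedron $F_c = \{λ ∈ ℝ^m_{≥ 0} : λ^T A = c^T\}$, which depends only on $A$. Writing $V_c$ for the finite, $b$-independent vertex set of $F_c$, and using that by LP duality the tight right-hand side $\max\{c^Tx : Ax \le b\} = \min\{λ^T b : λ ∈ F_c\}$ is attained at some vertex whenever it is finite, the monotonicity of $⌊ \cdot ⌋$ yields
\begin{displaymath}
P(b)' \;=\; \{x : Ax ≤ b\} \,\cap\, \bigcap_{c}\,\bigcap_{λ ∈ V_c} \bigl( c^T x ≤ ⌊λ^T b⌋ \bigr),
\end{displaymath}
where the initial $\{x : Ax ≤ b\}$ absorbs the case in which some inner LP is unbounded (which by Farkas' lemma forces $P(b) = ∅$).

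Next, I would control the denominators of the vertices. A vertex $λ ∈ V_c$ has at most $n$ nonzero entries that are determined by a nonsingular $n × n$ submatrix $A_S$ of $A$ via Cramer's rule, so every coordinate of $λ$ has denominator dividing $|\det A_S|$, which is bounded in terms of $n$ and $Δ$ by Hadamard's inequality. Denoting such a denominator by $q = q(c, λ) ∈ ℕ$ and setting $μ = μ(c, λ) := q · λ ∈ ℤ^m$, I take $D$ to be the least common multiple of the finitely many $q(c,λ)$ produced by the enumeration. Then $D$, the vertex set $V_c$, and the integer data $(c, μ, q)$ all depend only on $A$.

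For any $b ∈ ℤ^m$ with $b ≡ r \pmod D$, the divisibility $q \mid D$ gives $μ^T b ≡ μ^T r \pmod q$, so that
\begin{displaymath}
q · ⌊λ^T b⌋ \;=\; μ^T b - \bigl(μ^T r \bmod q\bigr).
\end{displaymath}
Multiplying the cut $c^T x ≤ ⌊λ^T b⌋$ by the positive integer $q$ therefore turns it into the affine-in-$b$ inequality $(q c)^T x ≤ μ^T b - (μ^T r \bmod q)$, with integer left-hand side $q c ∈ ℤ^n$, integer $b$-coefficient vector $μ ∈ ℤ^m$, and an integer constant depending only on $r$. Stacking one row per pair $(c, λ)$ together with the rows of $Ax ≤ b$ assembles the matrices $B ∈ ℤ^{m' × n}$, $C ∈ ℤ^{m' × m}$ and the vector $f_r ∈ ℤ^{m'}$ required by~\eqref{eq:1}.

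The main obstacle I expect is precisely the floor: for a fixed $c$ the vertex $λ^\ast$ realizing the tight right-hand side $⌊\min_{λ ∈ F_c} λ^T b⌋$ depends on $b$, so no single affine expression captures the tightest cut for $c$. The plan circumvents this by including a cut for \emph{every} vertex $λ ∈ V_c$, so the correct minimizer is automatically selected by the intersection, and then uses the residue hypothesis $b ≡ r \pmod D$ --- with $D$ chosen as the common multiple of all vertex denominators across the enumeration --- to rewrite each individual floor as an affine function of $b$ plus an $r$-dependent integer constant. The fact that only finitely many pairs $(c,λ)$ need to be considered, so that a single universal $D$ suffices, is the key quantitative point.
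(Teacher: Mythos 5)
Your proposal is correct and follows essentially the same route as the paper: both identify a finite, $b$-independent family of dual multipliers whose denominators divide a universal $D$ determined by the sub-determinants of $A$, and then use $b \equiv r \pmod D$ to turn each floor $\lfloor \mu^T b / q\rfloor$ into an affine function of $b$ plus an $r$-dependent integer constant. The only (cosmetic) difference is that the paper enumerates all $\Mod{D}$ multipliers $\mu \in \{0,\dots,D\}^m$ with $\mu^T A \equiv 0 \pmod D$, whereas you enumerate the vertices of the dual polyhedra $F_c$ with their individual denominators and take $D$ as their least common multiple.
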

\begin{proof}
Consider a cutting plane $c^Tx ≤ ⌊δ⌋$, where $c ∈ℤ^n$. If this cutting plane is non-redundant, then there exists an optimal vertex solution $λ∈ ℝ^m_{≥0}$  of the linear program \eqref{eq:LPforLambda} with $δ ≥ λ^Tb$. 
This means that there are at most $n$ linearly independent rows of $A$, indexed by $J⊆\{1,\dots,m\}$ such that $λ_J$ is the unique solution of the linear equation
\begin{displaymath}
  λ_J^T A_J = c^T,
\end{displaymath}
with all other entries of $λ$ equal to zero. By deleting linearly dependent columns of $A_J$ and the corresponding entries of $c$, this shows that there is a $|J|×|J|$ \emph{non-singular} sub-matrix $\overline{A}$ of $A$ with
\begin{displaymath}
  λ_J^T \overline{A} ∈ℤ^{|J|}. 
\end{displaymath}
Cramer's rule implies then that one can write
$\lambda = \mu/D$ with $μ ∈ \mathbb{Z}_{\geq 0}^m$, where $D$ is any integer multiple of $\det(\bar{A})$.
However, in order to have the same denominator for all vectors $c$, we choose $D$ as
 an integer multiple of all sub-determinants of $A$.

 \smallskip 
 \noindent 
 Furthermore, we have seen in Theorem~\ref{thm:Schrijver1980} that we can replace $λ∈ ℝ_{≥0}^m$ by $λ - ⌊λ⌋$
 or by $e_j$ for some $j \in [m]$. Thus we have
 \begin{displaymath}
λ = μ/D \,\text{ with }\,   \mu \in \{0,\ldots,D\}^m. 
\end{displaymath}
A Chv\'atal-Gomory cut $(μ/D)^T A x ≤ ⌊(μ/D)^Tb⌋$ with $\mu \in \{0,\ldots,D\}^m$ and $μ^TA ≡ 0 \pmod{D}$ is called $\Mod{D}$ cut~\cite{CapraraFischettiLetchford00}. 
 In other words, if $D ∈ ℕ$ is an integer multiple of the sub-determinants of $A$, then  the set of $\Mod{D}$ cuts contains all non-redundant cutting planes. 

\smallskip
%\noindent  
Let us now \emph{fix}  a $μ \in \{0,\ldots,D\}^m$ with $μ^TA≡ 0 \pmod{D}$ as well as a \emph{remainder} $r \in \{ 0,\ldots,D-1\}^m$ and consider  $b \in \mathbb{Z}^m$
with $b \equiv r \pmod{D}$.  The cutting plane derived by $λ = μ/D$ can be written as
\begin{equation}\label{eq:16}
  \begin{array}{rcl}    
  \underbrace{\left({\mu}/{D}\right)^TA}_{\text{row of }B}x & \leq &  \lfloor (\mu/D)^Tb \rfloor \\
                                                              & = &   \underbrace{\lfloor (\mu^Tr)/D \rfloor - (\mu^Tr)/D}_{\text{entry of } f_r}   + \underbrace{(\mu^T/D)}_{\text{row of }  C} b,
                                                                      \end{array}
                                                                    \end{equation}
where the equation above follows from the fact that $μ^T(b-r)/D$ is an integer.                                                 
% For fixed $μ \in \{0,\ldots,D-1\}^m$ and $r \in \{ 0,\ldots,D-1\}^m$
We conclude that the right-hand-side of the cut induced by $λ = μ/D$ is 
indeed linear in $b$.

Note that the above constructed $C$ and $f_r$ are rational and not-necessarily integral. In order to be conform with the claim, we thus  scale with $D$ to obtain integral data. Then using row indices $T := \{ \mu \in \{0,\ldots,D\}^m \mid μ^TA ≡ 0 \pmod{D} \}$ we may choose
\[
 B := (\mu^TA)_{\mu \in T}, \quad f_r := (D\lfloor (\mu^Tr)/D \rfloor - (\mu^Tr))_{\mu \in T}, \quad \textrm{and} \quad C := (\mu^T)_{\mu \in T}
\]
for $r \in \{ 0,\ldots,D-1\}^m$.
\smallskip
%\noindent
%For fixed  $r \in \{ 0,\ldots,D-1\}^m$ we therefore enumerate all $μ \in \{0,\ldots,D\}^m$ with $μ^TA ≡ 0 \pmod{D}$, fill the corresponding rows/entries of $B,C$ and $f_r$ as in~\eqref{eq:16}. % and append the original inequalities $Ax ≤b$. %This is the analogous result of Theorem~\ref{thr:main} for the first elementary closure.
%Note that the above constructed $C$ and $f_r$ are rational and not-necessarily integral. In order to be conform with the claim, we thus  scale with the least common multiple of the denominators to obtain integral data. 
\end{proof}

\subsection{Linearity of the $i$-th elementary closure}
\label{sec:linearity-i-th}
We now proceed with proving the statement from Theorem~\ref{thr:main} for $P(b)^{(i)}$ for any $i$. 
%the \emph{general statement} and its proof.
Recall that the Chv\'atal rank of $P(b)$ is bounded by $\rank(n,Δ)$ %≤ (n ⋅Δ)^{C ⋅n^2}$
and that  each $P(b)^{(i)}$ can be described by a constraint matrix with infinity norm bounded by $n^{\min\{i,\rank(n,Δ)\}} Δ$. We  define $D ∈ ℕ$ to be the least common multiple of nonzero sub-determinants of $n×n$-integer matrices with entries bounded by $n^{\rank(n,Δ)} ⋅ Δ$ in absolute value. In other words, let
\begin{equation}
  \label{eq:15}  
    D := \lcm \left\{ \det(C) ： C ∈ℤ^{k ×k},\, k≤n, \,  \|C \|_∞ ≤ n^{\rank(n,Δ)} ⋅Δ, \, \det(C) ≠ 0\right\}. 
  \end{equation}
  It follows that $P^{(i+1)}$ is defined by $\Mod{D}$ cuts derived from the description of $P^{(i)}$ as in Corollary~\ref{co:1}. In other words, $D$ is the largest modulus that is ever necessary in the derivation of a valid cutting plane. 
% Theorem~\ref{thr:main} is now an immediate corollary of the following statement concerning $P^{(i)}$. 

\begin{theorem}
  \label{thr:2}
  Let $A ∈ℤ^{m ×n}$ with non-repeating rows and $\|A\|_∞ ≤Δ$ and let $D$ be as in \eqref{eq:15}. For each $i ∈ \{ 0,\ldots,\rank(n,Δ)\}$ and $r ∈  \mathbb{Z}^m$, there exist $B ∈ ℤ^{m'×n}$, $C∈ℤ^{m'×m}$ and $f_r ∈ ℤ^{m'}$ such that for each $b ∈ℤ^m$ with $b -r ∈ D^i ⋅ ℤ^m$, one has %\todo{T: The matrices $B_r$ and $C_r$ don't seem to depend on $r$ (just $f_r$ does). This appears elsewhere as well.}
  \begin{equation}
    \label{eq:1}
    P(b)^{(i)} = \left\{x ∈ ℝ^n ： B\, x ≤f_r + \frac{ C (b-r)}{D^i} \right\}.
  \end{equation}
  Furthermore, $B$ satisfies $\|B\|_∞ ≤ n^i Δ$. 
\end{theorem}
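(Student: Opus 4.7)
\emph{The plan} is to prove Theorem~\ref{thr:2} by induction on $i$, mirroring the warmup argument of Theorem~\ref{thr:mainForFirstClosure} but applied iteratively to the running description of $P(b)^{(i)}$. For the base case $i=0$, set $B := A$, take $C$ to be the $m \times m$ identity matrix, and $f_r := r$; then $f_r + C(b-r)/D^0 = b$, and the norm bound $\|B\|_\infty \leq \Delta = n^0 \Delta$ is immediate.

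For the inductive step, assume $B^{(i)} \in \mathbb{Z}^{m' \times n}$, $C^{(i)} \in \mathbb{Z}^{m' \times m}$, $f_r^{(i)} \in \mathbb{Z}^{m'}$ represent $P(b)^{(i)}$ as claimed, with $\|B^{(i)}\|_\infty \leq n^i \Delta$. Given $b$ with $b - r \in D^{i+1} \mathbb{Z}^m$ (which a fortiori lies in $D^i \mathbb{Z}^m$), write $t^{(i)}(b) := f_r^{(i)} + C^{(i)}(b-r)/D^i$, so $P(b)^{(i)} = \{x : B^{(i)}x \leq t^{(i)}(b)\}$. By Theorem~\ref{thm:Schrijver1980}, each non-redundant inequality of $P(b)^{(i+1)} = (P(b)^{(i)})'$ is a Chv\'atal-Gomory cut with multiplier $\lambda \in [0,1]^{m'}$ satisfying $\lambda^T B^{(i)} \in \mathbb{Z}^n$. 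As in the proof of Theorem~\ref{thr:mainForFirstClosure} we may take $\lambda$ to be an optimal vertex of the underlying LP; Cramer's rule then forces $\lambda = \mu/D$ for some $\mu \in \{0,\dots,D\}^{m'}$, because the arising denominator is a sub-determinant of a $k \times k$ integer submatrix of $B^{(i)}$ with $k \leq n$ and entries bounded by $n^i \Delta \leq n^{\rank(n,\Delta)} \Delta$, and hence divides $D$ by the definition~\eqref{eq:15}.

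For such $\mu$ with $\mu^T B^{(i)} \equiv 0 \pmod D$, the cut reads
\[
\frac{\mu^T B^{(i)}}{D}\,x \;\leq\; \Big\lfloor \frac{\mu^T f_r^{(i)}}{D} + \frac{\mu^T C^{(i)} (b-r)}{D^{i+1}} \Big\rfloor.
\]
The crucial observation is that $\mu^T C^{(i)}$ is an integer vector and $b - r \in D^{i+1} \mathbb{Z}^m$, so the second summand inside the floor is already an integer and can be extracted:
\[
\frac{\mu^T B^{(i)}}{D}\,x \;\leq\; \Big\lfloor \frac{\mu^T f_r^{(i)}}{D} \Big\rfloor + \frac{\mu^T C^{(i)} (b-r)}{D^{i+1}}.
\]
This matches the target form: append the integer row $\mu^T B^{(i)}/D$ to $B^{(i+1)}$, the integer entry $\lfloor \mu^T f_r^{(i)}/D \rfloor$ to $f_r^{(i+1)}$, and the integer row $\mu^T C^{(i)}$ to $C^{(i+1)}$. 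The norm bound $\|\mu^T B^{(i)}/D\|_\infty \leq n \cdot n^i \Delta = n^{i+1} \Delta$ is precisely the $\lambda - \lfloor\lambda\rfloor$ estimate of Theorem~\ref{thm:Schrijver1980}. To also ensure $P(b)^{(i+1)} \subseteq P(b)^{(i)}$, append each row of the level-$i$ system, rewritten with denominator $D^{i+1}$ by multiplying the corresponding row of $C^{(i)}$ by $D$ and leaving $f_r^{(i)}$ unchanged.

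\textbf{Main obstacle.} The delicate point is making the denominators mesh across all levels simultaneously with the \emph{same} modulus $D$. The definition~\eqref{eq:15} is tailored for exactly this: $D$ is large enough to absorb every Cramer's-rule denominator that can arise at any level $i \leq \rank(n,\Delta)$, where $\|B^{(i)}\|_\infty$ may have grown to $n^i \Delta$. Using a uniform $D$ is what keeps the right-hand side in the prescribed affine form $f_r + C(b-r)/D^i$ throughout the iteration, and what lets the induction reach $i = \rank(n,\Delta)$, at which point $P(b)^{(i)} = P(b)_I$ and Theorem~\ref{thr:main} follows.
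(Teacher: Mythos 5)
Your proposal is correct and follows essentially the same route as the paper's proof: induction on $i$, reduction of non-redundant cuts to $\Mod{D}$ multipliers $\mu/D$ via vertex solutions and Cramer's rule against the definition~\eqref{eq:15} of $D$, and extraction of the integer term $\mu^T C(b-r)/D^{i+1}$ from under the floor. The only cosmetic differences are your base-case choice $f_r := r$ (the paper uses $r=\bm{0}$, $f_r:=\bm{0}$) and your explicit re-appending of the level-$i$ rows, which in the paper is subsumed by the multipliers $\mu = D e_j$ lying in the index set $T$.
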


\begin{proof}
 
 We prove the assertion by induction on $i$. For $i=0$, one has
  \begin{displaymath}
     P(b)^{(0)} = P(b) = \{  x ∈ ℝ^n ： Ax ≤b \}.
   \end{displaymath}
   Each $b ∈ ℤ^m$ is congruent to $\bm{0}$ modulo ${1=D^0}$. We set, for $r = \bm{0}$, 
   \begin{displaymath}
     B := A, \quad f_r := \bm{0}, \quad \text{and} \quad C := I.  
   \end{displaymath}

\medskip 
   \noindent   
   Next, we assume that the assertion is true for $i \in \{ 0,\ldots,\rank(n,Δ)-1\}$ %i ∈ \mathbb{Z}_{\geq 0}$
   and show that it is true for $i+1$ as well. %If $i≥ \rank(n,Δ)$, then $P(b)^{(i+1)} = P(b)^{(i)}$ and we are done.

   \medskip 
   \noindent   
  Let $r ∈ \mathbb{Z}^m$.  By the induction hypothesis, one has %  and let  $b ≡ r \pmod{D^{(i+1)}}$.
   \begin{equation} \label{eq:MainProofIH}
     P(b)^{(i)} = \left\{ x ∈ ℝ^n ： Bx ≤ f_r + \frac{C (b-r)}{D^i} \right\} 
   \end{equation}
for all $b \in \mathbb{Z}^m$ with $b ≡ r \pmod{D^{i}}$   where $B ∈ ℤ^{m' ×n}$ with $\|B\|_{\infty} \leq  n^i Δ$.
Now fix a $b \in \mathbb{Z}^m$ with $b ≡ r \pmod{D^{i+1}}$. In particular also $b ≡ r \pmod{D^{i}}$ and so
 \eqref{eq:MainProofIH} still holds for $b$.
%   for each $b$ with  $b ≡ r \pmod{D^{i}}$ and in particular for each $b$ with $b ≡ r \pmod{D^{i+1}}$. 
% The absolute values of the entries of  $B ∈ ℤ^{m' ×n}$   are bounded from above by  $ n^i Δ$.

We now construct the cutting planes of $P(b)^{(i)}$. %, i.e., we construct $P(b)^{(i+1)}$. 
By the choice of $D$ in \eqref{eq:15} and the argument from the proof of Theorem~\ref{thr:mainForFirstClosure} we
know that a non-redundant cutting plane of $P(b)^{(i)}$ is of the form
   \begin{equation}
     \label{eq:6}
     (μ/D)^T B x ≤ \left⌊ (μ/D)^T \left(f_r + \frac{C (b-r)}{D^i}\right)\right⌋,
   \end{equation}
   where  $μ ∈ \{0,\dots,D\}^{m'}$ is  of support  at most $n$ and $μ^T B ≡ 0 \pmod{D}$.
We wish to describe integer  matrices $B'$, $C'$ and an integer vector $f_r'$ such that 
  \begin{displaymath}
     P(b)^{(i+1)} = \left\{ x ∈ ℝ^n ： B'x ≤ f_r' + \frac{ C' (b-r)}{D^{i+1}}\right\} 
   \end{displaymath} 
   %Clearly, $\|B'\|_∞ ≤ n^{i+1} Δ$.  %We next inspect the right-hand-side of~\eqref{eq:6}.  % Since $b ≡ r \pmod{D^{i+1}}$, we can write
   % \begin{equation}
   %   \label{eq:7}
   %   b = D^{i+1} ⋅ b' + r, \text{ with } b'  ∈ ℤ^{m}. %b' = (b-r)/D^{i+1} ∈ ℤ^{m}. 
   % \end{equation}
%   From this %~\eqref{eq:7}
  Since $b ≡ r \pmod{D^{i+1}}$, the right-hand-side of~\eqref{eq:6} is 
   \begin{eqnarray*}
     \left⌊ (μ/D)^T \left(f_r + \frac{C(b-r)}{D^i} \right)\right⌋ & = & \left⌊ (μ/D)^T f_r   +  \frac{ μ^T C (b-r)}{D^{i+1}} \right⌋ \\
     & = & \underbrace{\left⌊ (μ/D)^T f_r \right⌋}_{\text{ entry of } f'_r}   +  \frac{ \overbrace{μ^T C}^{\text{row of } C'_r} (b-r)}{D^{i+1}},
   \end{eqnarray*}
   where the last equation follows from the fact that each entry of $b-r$ is divisible by $D^{i+1}$ and therefore that
   \begin{displaymath}
     \frac{ μ^T C (b-r)}{D^{i+1}} ∈ℤ. 
   \end{displaymath}
 %  Let $B'$ be the  integer matrix whose rows are comprised %first, of the rows of $B$ and then
 %  of the left-hand-side vectors of cutting planes~\eqref{eq:6}.
   Hence, using row indices $T := \{ \mu \in \{ 0,\ldots,D\}^{m'} \mid \mu^TB \equiv 0 \pmod{D} \}$
   we can choose
   \[
  B' := ((\mu/D)^TB)_{\mu \in T}, \quad f_r' :=  (\left⌊ (μ/D)^T f_r \right⌋)_{\mu \in T}, \quad \textrm{and} \quad C' := (\mu^TC)_{\mu \in T}
   \]
   By construction, $B',C'$ and $f_r'$ are integer and $\|B'\|_{\infty} \leq n\|B\|_{\infty} \leq n^{i+1}\Delta$.
%   Observe that $μ^T C_r$ is an integer vector. It comprises the  row of the matrix $C_r'$ corresponding to the cutting plane that we have derived with~$μ/D$. 
%The corresponding row/component of $C'_r$ and $f'_r$ are analogous to~\eqref{eq:16} in the construction of the first closure. 
   
   %  Now let
   % \begin{equation}
   %   \label{eq:4}
   %      c^Tx ≤⌊δ⌋
   % \end{equation}
   % be a cutting plane of $P(b)^{(i)}$. From the above it follows that there exists a $λ ∈ ℝ^{m'}_{≥0}$ that is a vertex solution of the linear program
   % \begin{displaymath}
   %   \min\{ λ^T b ： λ ∈ ℝ^{m'}_{≥0}, \, λ^T B = c^T\} 
   % \end{displaymath}
   % and the inequality~\eqref{eq:4} is implied by the cutting plane
   % \begin{equation}
   %   \label{eq:5}
   %   λ^ΤB ≤ λ^T (f + Cb). 
   % \end{equation}
   % Since $λ$ is a vertex solution, it is of the form
   % \begin{displaymath}
   %   λ ∈ \{0,\dots, 
   % \end{displaymath}
   
\end{proof}

We note that the condition in Theorem~\ref{thr:2} that $A ∈ℤ^{m × n}$ has non-repeating rows can be dropped. In this case, we have a dependence on $n$, $m$ and $Δ$.  

%\begin{remark}
%  \label{rem:1}
%  Theorem~\ref{thr:main} is an immediately corollary of Theorem~\ref{thr:2}.  
%\end{remark}

\begin{proof}[Proof of Theorem~\ref{thr:main}]
  We abbreviate $k := \rank(n,\Delta)$. Let $r \in \mathbb{Z}^m$.
 By the definition of Chv\'atal rank and Theorem~\ref{thr:2}, for all $b \in \mathbb{Z}^m$ with $b \equiv r \pmod{D^k}$ one has 
  \begin{eqnarray*}
    P(b)_I = P(b)^{(k)} &=& \Big\{ x \in \mathbb{R}^n : Bx \leq f_r + \frac{C(b-r)}{D^{k}} \Big\} \\
    &=& \big\{ x \in \mathbb{R}^n : D^kBx \leq (D^kf_r-Cr) + Cb \big\} 
  \end{eqnarray*}
  where $B$ and $C$ are some integer matrices, $f_r$ is an integer vector $f_r$ and $D$ is defined in \eqref{eq:15}.
  We note that the proof of Theorem~\ref{thr:2} provides a finite and constructive procedure to compute $D$, the matrices $B$, $C$ and the vector $f_r$.
  Then the choices to satisfy the claim of Theorem~\ref{thr:main} are
  \[
   B' := D^kB, \quad f_r' := D^kf_r-Cr, \quad C' := C, \quad \textrm{and} \quad D' := D^k.
 \]
  Using the Hadamard bound\footnote{The Hadamard bound says that for any $M \in \mathbb{R}^{n \times n}$ one has $|\det(M)| \leq (\sqrt{n}\|M\|_{\infty})^n$.} and $k = \rank(n,Δ)≤ (n⋅Δ)^{C ⋅n^2}$ from
  Theorem~\ref{thr:1}, the number $D'$ can be bounded from above by %\todo{T: I agree. My mistake.}
  % \todo[inline]{T: Wouldn't it be $D \leq lcm(1,\ldots, rank(\Delta,n)^n \leq lcm(1,\ldots, (n\Delta)^{Cn^3})$? Wouldn't that be doubly-exponential?}  \todo[inline]{F: Observe that the components of the matrix describing $P^{(i)}$ grow as $n^i$. This means that the determinants grow as $2^{n^{i})}$. The $\lcm$ makes things tripe exponential then. This is exactly where we leave some room for improvement. It seems unnecessary to generate inequalities in the successive closures that have larger norm than the facet-normals of $P_I$. I am sure that the GC-procedure can be adapted. But this is probably out of scope. I conjecture that everything can be done in double-exponential time. }.   
  \begin{eqnarray*}
    D'=D^k %&\leq& \lcm\{ 1,\ldots,(\sqrt{n} n^k )^n \}^k \\
    & ≤ &  \lcm\left\{ 1,\dots,  n^{{(n ⋅Δ)}^{c'⋅n^2}} \right\}^{{(n ⋅Δ)}^{c⋅n^2}} \\
      & ≤ & 2^{n^{{(n ⋅Δ)}^{c'' ⋅ n^2}}},
\end{eqnarray*}
where $c', c'' ∈ ℕ_+$ are suitable constants. This bound is triple-exponential in $n$. 
\end{proof}

\section{Applications}
\label{sec:applications-1}

\subsection{The convexity of integer cones}
\label{sec:conv-integ-cones}

Recall that for a matrix $W \in \mathbb{Z}^{m \times n}$, we denote the \emph{integer cone} as $\intcone(W) = \{ Wx : x \in \mathbb{Z}_{\geq 0}^n \}$. An integer cone is a discrete set and in general it will have ``holes'', i.e.,  integer points in the rational cone of $W$ that are not in the integer cone. But Cslovjecsek, Kouteck{\'{y}}, Lassota, Pilipczuk and Polak~\cite{DBLP:conf/soda/CslovjecsekKLPP24} proved that integer cones are indeed convex in a discrete sense when intersected with shifts of certain sparse lattices.
\begin{theorem}[Convexity of Integer Cones~\cite{DBLP:conf/soda/CslovjecsekKLPP24}] \label{thm:ConvexityOfIntegerCones}
  Let $W \in \mathbb{Z}^{m \times n}$ with $\|W\|_{\infty} \leq \Delta$. Then there is a $D \in \mathbb{N}$ dependent only on $m$ and $\Delta$ so that for every $r \in \{ 0,\ldots,D-1\}^m$ there exists a polyhedron $Q_r \subseteq \mathbb{R}^m$ so that
  \[
  \Lambda_r \cap \intcone(W) = \Lambda_r \cap Q_r
\]
where $\Lambda_r = r + D ⋅\mathbb{Z}^m$.
\end{theorem}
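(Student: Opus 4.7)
The plan is to reduce membership in $\intcone(W)$ to the non-emptiness of the integer hull of a suitably parameterized polyhedron, and then invoke Theorem~\ref{thr:main} together with linear programming duality. As a preparatory step I would first delete duplicate columns of $W$; this does not affect $\intcone(W)$ and yields $n \le (2\Delta+1)^m$, so the ambient dimension is already bounded in terms of $m$ and $\Delta$ alone. Without this trimming the final $D$ cannot possibly depend on $m$ and $\Delta$ only, because Theorem~\ref{thr:main} produces a $D$ that grows with the number of columns.

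I would then consider the parameterized polyhedron
\[
  P(b) \;=\; \{x \in \mathbb{R}^n : Wx \le b,\ -Wx \le -b,\ -x \le 0\},
\]
whose constraint matrix $A \in \mathbb{Z}^{(2m+n)\times n}$ has $\|A\|_\infty \le \max(\Delta,1)$. Since $P(b)\cap\mathbb{Z}^n = \{x \in \mathbb{Z}_{\ge 0}^n : Wx = b\}$, one has $b \in \intcone(W)$ iff $P(b)_I \neq \emptyset$. Applying Theorem~\ref{thr:main} (invoking Remark~\ref{rem:3} to waive the non-repeating-rows assumption if needed) yields a modulus $D$ depending only on $m$ and $\Delta$, fixed integer matrices $B$ and $\widetilde{C}$, and integer vectors $f_{r'}$ indexed by residues $r'$ of the right-hand-side $\widetilde b := (b,-b,0)$ modulo $D$, such that $P(b)_I = \{x : Bx \le f_{r'} + \widetilde{C}\widetilde b\}$ whenever $\widetilde b \equiv r' \pmod D$. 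The residue of $\widetilde b$ is determined by the residue $r$ of $b$, so the description rewrites as $P(b)_I = \{x : Bx \le f_r + Cb\}$ for every $b \equiv r \pmod D$, where $C \in \mathbb{Z}^{m' \times m}$ is assembled from the appropriate column blocks of $\widetilde C$ and $f_r$ is the corresponding $f_{r'}$.

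The final step is to convert feasibility of this linear system into a polyhedral condition on $b$. By Farkas' lemma, $\{x : Bx \le f_r + Cb\}$ is non-empty iff $(y^{(j)})^T(f_r + Cb) \ge 0$ for every extreme ray $y^{(1)},\dots,y^{(k)}$ of the finitely generated cone $\{y \ge 0 : B^T y = 0\}$. Collecting these finitely many linear inequalities in $b$ yields the polyhedron
\[
  Q_r \;=\; \{b \in \mathbb{R}^m : -(y^{(j)})^T C b \le (y^{(j)})^T f_r,\ j = 1,\dots,k\},
\]
and for $b \in \Lambda_r = r + D\mathbb{Z}^m$ we conclude $b \in \intcone(W)$ iff $b \in Q_r$, which is the claim.

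The only real obstacle I foresee is the preparatory deduplication of the columns of $W$; once the dimension $n$ is bounded by a function of $m$ and $\Delta$, the rest is a straightforward chaining of Theorem~\ref{thr:main} with Farkas duality, and no new combinatorial ideas are required beyond those already developed in the proof of the main theorem.
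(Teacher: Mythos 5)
Your proposal is correct and follows essentially the same route as the paper: deduplicate the columns of $W$ to bound $n$ by $(2\Delta+1)^m$, apply Theorem~\ref{thr:main} to $P(b)=\{x: Wx\le b,\ -Wx\le -b,\ -x\le 0\}$, and take $Q_r$ to be the set of $b$ for which the resulting system $Bx\le f_r+Cb$ is feasible. The only cosmetic difference is that you make the projection explicit via Farkas' lemma, while the paper simply observes that $Q_r$ is the linear projection of a polyhedron and hence a polyhedron; your extra care about the residue of the stacked right-hand side $(b,-b,0)$ being determined by that of $b$ is a correct detail the paper leaves implicit.
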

The original proof of this result takes a substantial amount of work. We demonstrate that it is quickly implied by our main result.
\begin{proof}[Proof of Theorem~\ref{thm:ConvexityOfIntegerCones}]
  We can assume that $W ∈ℤ^{m ×n}$ does not have repeated columns. Therefore, $n$ is bounded by $(2 ⋅  Δ +1)^m$. 
  Consider $P(b) = \{ x \in \mathbb{R}^n : Wx = b, x \geq \bm{0}\}$. Observe that the condition $Wx = b$ and $ x \geq \bm{0}$ can be formulated in the usual inequality standard-form as the conjunction of $Wx ≤ b$, $-Wx ≤ -b$ and $-Ix ≤ \bm{0}$. 

  Let $D$ be as in Theorem~\ref{thr:main} depending on $Δ$ and $n$ (and thus $m$) only.  Now, fix any $r \in \{ 0,\ldots,D-1\}^m$ and let $B,C,f_r$  be the vectors so that
  \[
    P(b)_I = \{ x \in \mathbb{R}^n : Bx \leq f_r + Cb \}
  \]
  for all $b \in \mathbb{Z}^m$ with $b \equiv r \pmod{D}$. Let $Q_r = \{ b \in \mathbb{R}^m \mid \exists x \in \mathbb{R}^n: Bx \leq f_r + Cb\}$. Then $Q_r$ is the linear projection of a polyhedron and hence it is again a polyhedron. 

  Now, consider a right hand side $b∈ ℤ^m$ with $b \equiv r \pmod{D}$. Then
  \begin{eqnarray*}
    b \in \intcone(W) &\Leftrightarrow& P(b) \cap \mathbb{Z}^n \neq \emptyset \\
                               &\Leftrightarrow& P(b)_I \neq \emptyset \\
    &\Leftrightarrow& b \in Q_r.
  \end{eqnarray*}
\end{proof}
We should point out however that a closer look into \cite{DBLP:conf/soda/CslovjecsekKLPP24} reveals a double-exponential bound for $D$, while our construction gives a triple-exponential bound on $D$.

\subsection{Optimizing over 2-stage stochastic IPs}
\label{sec:optimizing-over-2}

A \emph{2-stage stochastic integer program} is of the form
\begin{eqnarray*}
  \max c^Tx + \sum_{i=1}^n d_i^Ty_i & & (\textsc{2SSIP}) \\
                                        U_ix + V_iy_i &=& b_i \quad \forall i=1,\ldots,n \\
                                        x,y_1,\ldots,y_n &\in& \mathbb{Z}_{\geq 0}^k, 
\end{eqnarray*}
with integer matrices $U_i,V_i ∈ℤ^{k ×k}$. 
It will be useful to think of $k$ as a constant and $n$ as large. Then after determining the $k$
variables in $x$, the rest of the IP decomposes into $n$ disjoint parts, again with $k$ variables each. 
It was proven in Cslovjecsek et al~\cite{DBLP:conf/soda/CslovjecsekKLPP24} that the \emph{decision} version
(without an objective function) can be solved in time $g(k,\max_i \|V_i\|_{\infty})$ times a fixed polynomial in the encoding length. The reader may note that this running time has only a polylogarithmic dependence on $\|U_i\|_{\infty}$.
The argument of \cite{DBLP:conf/soda/CslovjecsekKLPP24} is based on Theorem~\ref{thm:ConvexityOfIntegerCones}
and the authors left it as an open problem whether the same FPT-type
running time is possible for the optimization variant as stated above in $({\textsc{2SSIP}})$.
Intuitively, given a matrix $A$ and remainder $r$, Theorem~\ref{thm:ConvexityOfIntegerCones} provides a polyhedral description of the right hand sides $b$ with $b \equiv r\pmod{D}$ for which there exists a $x \in \mathbb{Z}_{\geq 0}^n$ with $Ax = b$. In contrast, our main result gives a polyhedral
description of the \emph{pairs} $(x,b)$ with $Ax = b$ where $x \in \mathbb{Z}_{\geq 0}^n$ and $b \equiv r \pmod{D}$.
This gives us access to the coefficient vector $x$ which is needed in the objective function.

%which does not allow ``access'' to the coefficient vectors $y_i$ and so 
We solve their open problem in the affirmative:
\begin{theorem} \label{thm:2SSIPalgorithm}
The problem $(\textsc{2SSIP})$ can be solved in time $g(k,\Delta)$ times a fixed polynomial in the encoding length where $\Delta = \max\{ \|V_i\|_{\infty} : i=1,\ldots,n\}$.
\end{theorem}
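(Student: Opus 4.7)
\textbf{Proof plan for Theorem~\ref{thm:2SSIPalgorithm}.}
The plan is to reduce $(\textsc{2SSIP})$ to a mixed-integer linear program with only $k$ integer variables, by enumerating the residue class of $x$ modulo the constant $D = D(k,\Delta)$ supplied by Theorem~\ref{thr:main}. Concretely, for each block $i \in \{1,\ldots,n\}$ I would apply Theorem~\ref{thr:main} to the polyhedron
\[
  P_i(\beta) \;=\; \{y \in \mathbb{R}^k : V_i y = \beta, \, y \geq \bm{0}\},
\]
whose standard inequality description (stacking $V_i$, $-V_i$, $-I$) has entries bounded by $\Delta$. This produces a common $D$ depending only on $k$ and $\Delta$, matrices $B_i, C_i$, and, for every residue $s \in \{0,\ldots,D-1\}^k$, an integer vector $f_{i,s}$, so that
\[
  P_i(\beta)_I \;=\; \{y \in \mathbb{R}^k : B_i y \leq f_{i,s} + C_i\beta\} \quad \text{whenever } \beta \equiv s \pmod D.
\]

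Next I would enumerate all $\rho \in \{0,\ldots,D-1\}^k$ as a guess for $x \bmod D$. Given $\rho$, set $s_i := (b_i - U_i\rho) \bmod D$; since $U_i$ is integer, every $x \in \mathbb{Z}_{\geq 0}^k$ with $x \equiv \rho \pmod D$ automatically satisfies $b_i - U_i x \equiv s_i \pmod D$, so the integer-hull description with vector $f_{i,s_i}$ is valid simultaneously for all such $x$. Writing $x = \rho + Dz$ with $z \in \mathbb{Z}_{\geq 0}^k$, the restriction of $(\textsc{2SSIP})$ to this residue class becomes the mixed-integer linear program
\begin{align*}
  \max\; & c^T(\rho + Dz) + \sum_{i=1}^n d_i^T y_i \\
  \text{s.t. } & B_i y_i \;\leq\; f_{i,s_i} + C_i\bigl(b_i - U_i(\rho + Dz)\bigr) \quad \forall\, i, \\
               & z \in \mathbb{Z}_{\geq 0}^k, \quad y_i \in \mathbb{R}^k.
\end{align*}
Because each $y_i$ ranges over the integer hull of the original block polyhedron, an LP-vertex optimum in $y$ is automatically integral, so this MILP has the same optimum as $(\textsc{2SSIP})$ restricted to $x \equiv \rho \pmod D$.

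The resulting MILP has only $k$ integer variables, and its data is assembled in time $n \cdot g_1(k,\Delta)$ using Theorem~\ref{thr:main}. I would then invoke the mixed-integer extension of Lenstra's algorithm, which solves MILPs with a fixed number of integer variables in polynomial time, thereby handling one fixed $\rho$ in time polynomial in the encoding length. Iterating over all $D^k \leq g_0(k,\Delta)$ residues $\rho$ and returning the best solution yields the claimed $g(k,\Delta)\cdot\mathrm{poly}$ bound. I expect the only point worth verifying carefully to be precisely the feature of Theorem~\ref{thr:main} that Theorem~\ref{thm:ConvexityOfIntegerCones} does not provide: the residue class of $x \bmod D$ simultaneously fixes the residue classes of all $n$ right-hand-sides $b_i - U_i x$, and the integer-hull description depends \emph{linearly} on $x$, which is exactly what allows $x$ itself---appearing in the objective via $c^Tx$---to be optimized over by a single MILP of fixed integer dimension.
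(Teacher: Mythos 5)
Your proposal is correct and follows essentially the same route as the paper: guess/enumerate the residue of $x$ modulo $D$, use Theorem~\ref{thr:main} to replace each block's integrality condition by the linear inequality description of $P_i(b_i-U_ix)_I$, and solve the resulting mixed-integer program with only $k$ integer variables, noting that an optimal extreme point has each $y_i$ at a vertex of an integral polyhedron and is hence integral. The substitution $x=\rho+Dz$ versus keeping the congruence constraint is an immaterial difference.
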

\begin{proof}
We use the following algorithm:
\begin{enumerate}
\item[(1)] Let $P_i(b') = \{ y_i \in \mathbb{R}^k \mid V_iy_i = b', y_i \geq \bm{0} \}$ for $i=1,\ldots,n$ and $b' \in \mathbb{Z}^k$.
\item[(2)] Let $D := D(k,\Delta)$ be the parameter as in Theorem~\ref{thr:main}. % for $k$ variables and maximum absolute values $\Delta$. %that works for parameters $k$ and $\Delta$.
\item[(3)] Guess $r \in \{ 0,\ldots,D-1\}^k$ so that $x^{**} \equiv r \pmod D$ where $(x^{**},y_1^{**},\ldots,y_n^{**})$ is an optimum integral solution\footnote{The algorithm only needs to know $r$ and not the solution $(x^{**},y_1^{**},\ldots,y_n^{**})$. The existence of the solution is enough.} to $(\textsc{2SSIP})$.
\item[(4)] For all $i=1,\ldots,n$, let $r_i \in \{ 0,\ldots,D-1\}^k$ with $r_i \equiv b_i - U_ir \pmod{D}$.
\item[(5)] For each $i \in [n]$, apply Theorem~\ref{thr:main} to write $P_i(b')_I = \{ y_i \in \mathbb{R}^k \mid B_iy_i \leq f_i+C_ib'\}$ for all $b' \in \mathbb{Z}^k$ with $b' \equiv r_i \pmod{D}$.
\item[(6)] Compute an optimum extreme point solution $(x^*,y_1^*,\ldots,y_n^*)$ to the mixed integer linear program   \begin{eqnarray*}
    \max c^Tx + \sum_{i=1}^n d_i^Ty_i & & \\
    B_iy_i &\leq& f_i + C_i(b_i - U_ix) \quad \forall i=1,\ldots,n\\
     x &\equiv& r \pmod{D} \\
     x &\in& \mathbb{Z}_{\geq 0}^k
  \end{eqnarray*}
\item[(7)] Return $(x^*,y_1^*,\ldots,y_n^*)$
\end{enumerate}
We know that $b_i - U_ix^{**} \equiv r_i \pmod{D}$ and so $(x^{**},y_1^{**},\ldots,y_n^{**})$ is feasible for the MIP in (6).
The MIP in $(6)$ has only $k$ many integral variables (and $nk$ many fractional ones) and hence it
can be solved within the claimed running time.
On the other hand, if $(x^*,y_1^*,\ldots,y_n^*)$ is an optimum extreme point to (6), then $x^* \in \mathbb{Z}_{\geq 0}^k$ and each $y^{*}_i$ is an extreme point to $P_i(b_i-U_ix^*)_I$. Each such extreme point must be integral. That shows the claim. 
\end{proof}
% We would like to point out that a running time of the form $g(k,\max_i \|V_i\|_{\infty},\max_i \|U_i\|_{\infty})$ also for the optimization version had been proven by Klein~\cite{ComplexityOf2StageStochIPMathProgB}.
% But the running time of our algorithm only depends polylogarithmically on the entries in $U_i$. 
%
% Diesen Absatz habe ich mal gelöscht. Klein war nicht der erste. Schultz und Hemmecke hatten schon sowas aber ohne explizite Schranke. Das habe ich aber auch in der Into erwähnt.

\subsection{An almost solution to the 4-block problem}
\label{sec:an-almost-solution}
%$
The \emph{4-block} problem is the IP of the form
\begin{eqnarray*}
  \max c^Tx + \sum_{i=1}^n d_i^Ty_i & & ({\textsc{4BlockIP}}) \\
  Wx + X_1y_1 + \ldots + X_ny_n &=& a \quad (*) \\
                                        U_ix + V_iy_i &=& b_i \quad \forall i=1,\ldots,n \\
                                        x,y_1,\ldots,y_n &\in& \mathbb{Z}_{\geq 0}^k 
\end{eqnarray*}
where $U_i,V_i,W,X_i \in \mathbb{Z}^{k \times k}$ and $a,b_i \in \mathbb{Z}^k$.
This is a strict generalization of $({\textsc{2SSIP}})$. Note that after deleting $k$ many variables \emph{and} $k$ many constraints, the problem decomposes into disjoint IPs with $k$ variables each. Indeed,
$(\textsc{4BlockIP})$ can be solved in time $n^{f(k,\Delta)}$ times a fixed polynomial in the encoding length where $\Delta = \max\{ \|U_i\|_{\infty},\|V_i\|_{\infty},\|W\|_{\infty},\|X_i\|_{\infty} : i \in [n] \}$, see the work of Hemmecke, K\"oppe and Weismantel~\cite{DBLP:conf/ipco/HemmeckeKW10}. It is a popular open
problem in the theoretical IP community whether there is an FPT-type algorithm as well. 
\begin{conjecture}[4-block conjecture]
  $(\textsc{4BlockIP})$ can be solved in time $f(k,\Delta)$ times a fixed polynomial in the encoding length
  where $\Delta$ is the largest absolute value appearing in $U_i,V_i,W,X_i$. 
\end{conjecture}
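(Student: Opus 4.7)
The plan is to extend the argument behind Theorem~\ref{thm:2SSIPalgorithm} by coupling the per-block integer-hull descriptions with an explicit enumeration over the blocks that can become ``fractional'' at an optimum. Let $D := D(k,\Delta)$ be the modulus from Theorem~\ref{thr:main} applied to the single-block polytope $P_i(b') = \{y_i \geq \bm{0} : V_iy_i = b'\}$. I would first guess $r \in \{0,\ldots,D-1\}^k$ so that $x^{**} \equiv r \pmod D$ and derive residues $r_i \equiv b_i - U_ir \pmod D$. Theorem~\ref{thr:main} then furnishes block-wise descriptions $P_i(b_i-U_ix)_I = \{y_i : B_iy_i \leq f_i + C_i(b_i - U_ix)\}$ valid for every admissible $x$. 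The natural candidate MIP is
\begin{displaymath}
\max\Bigl\{ c^Tx + \sum_i d_i^Ty_i : Wx + \sum_i X_iy_i = a,\; B_iy_i \leq f_i + C_i(b_i - U_ix)\,\forall i,\; x \in \mathbb{Z}_{\geq 0}^k,\; x \equiv r \pmod D\Bigr\},
\end{displaymath}
which has only $k$ integer variables and is solvable within the desired FPT running time by Lenstra's algorithm applied on top of LP.

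The fundamental obstruction is that an optimal extreme point of the above MIP is no longer automatically integer in the $y_i$'s. After $x$ is fixed, the residual LP has $nk$ continuous variables and a $k$-dimensional linking equation; counting active per-block inequalities at an extreme point shows that at most $k$ blocks can have fewer than $k$ tight per-block constraints, and only those ``exceptional'' blocks admit a $y_i$ that is not a vertex of $P_i(b_i-U_ix)_I$ and hence potentially non-integer. All the remaining $n-k$ blocks automatically contribute integer $y_i$. This is the weakened analogue of the observation that powered the $(\textsc{2SSIP})$ algorithm, but degraded by the $k$-dimensional linking coupling.

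To close the loop, I would enumerate the exceptional set $I \subseteq [n]$ of $\leq k$ blocks (at most $n^k$ choices), and for each $i \in I$ enumerate both the residue $s_i \in \{0,\ldots,D-1\}^k$ of $z_i := X_iy_i$ and the value $z_i$ itself within a bound depending only on $k,\Delta$ supplied by a proximity theorem. For each guess, apply Theorem~\ref{thr:main} to the augmented polytope $\tilde P_i(b',z) = \{y_i \geq \bm{0}:\, V_iy_i = b',\, X_iy_i = z\}$ in the combined $2k$-dimensional right-hand-side $(b',z)$ to obtain an integer-hull description valid on the relevant residue class, plug it into the MIP in place of the per-block description for each $i \in I$, and solve. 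By construction the resulting LP extreme points must now be integer in every $y_i$. The hard part—and the genuine obstacle to turning this outline into a proof of the conjecture—is producing an $n$-independent \emph{proximity bound} on the magnitude of $z_i = X_iy_i$ at an optimal integer solution relative to the guessed relaxed optimum, so that the enumeration of exceptional $z_i$'s costs only $f(k,\Delta)$ and not a quantity growing with $n$; Graver-basis and sensitivity techniques currently yield bounds that blow up with $n$ in the 4-block regime, and it is precisely this gap that stands between the plan above and a full proof of the $(\textsc{4BlockIP})$ conjecture.
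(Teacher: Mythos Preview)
The statement you are attempting to prove is a \emph{conjecture}, not a theorem: the paper explicitly introduces it as ``a popular open problem in the theoretical IP community'' and does \emph{not} prove it. What the paper does establish instead is the weaker ``almost solution'' result immediately following the conjecture, namely that in FPT time one can find a vector that meets the objective target, satisfies all block constraints exactly, and violates only the $k$ linking constraints $(*)$ by an additive $g(k,\Delta,\max_i\|X_i\|_\infty)$. The paper's argument for that weaker theorem shares your first two steps --- guess the residue of $x^{**}$, replace each block by its integer-hull description, solve the resulting MIP, and observe that at most $k$ blocks yield fractional $y_i$ --- but then simply \emph{rounds} each of those $\le k$ fractional $y_i$ to a nearby integer point via the proximity bound (Theorem~\ref{thm:ProximityBound}), accepting the resulting bounded violation of $(*)$ rather than trying to repair it.

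Your proposal goes further and tries to actually prove the conjecture, and you correctly isolate the proximity obstruction at the end. But there is a second, more elementary gap that you pass over: enumerating the exceptional index set $I\subseteq[n]$ with $|I|\le k$ already costs $\binom{n}{k}=n^{\Theta(k)}$, which is XP and not FPT. This alone puts your scheme in the same complexity class as the Hemmecke--K\"oppe--Weismantel algorithm that the conjecture seeks to improve; even a perfect $n$-independent proximity bound on the $z_i$'s would not rescue the running time. In short, your outline recovers (and slightly extends) the paper's partial result, but it does not --- and, as you yourself concede in the final sentence, cannot with current tools --- settle the conjecture.
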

We prove that one can at least find an \emph{almost} feasible solution that violates the $k$ many joint constraints by an additive constant (assuming $k$ and the maximum entries are bounded by constants).
We require the following fact:
\begin{theorem}[Proximity bound\label{thm:ProximityBound}] 
  Let $P = \{ x \in \mathbb{R}^n \mid Ax \leq b\}$ where $A \in \mathbb{Z}^{m \times n}$, $b \in \mathbb{R}^m$ and $\|A\|_{\infty} \leq \Delta$ for $\Delta \in \mathbb{N}$.
  Then for any $y \in P_I$ and $c \in \mathbb{R}^n$ there is a $z \in P \cap \mathbb{Z}^n$ with $c^Tz \geq c^Ty$
  and $\|z-y\|_{\infty} \leq (\Delta n)^{O(n^3)}$.
\end{theorem}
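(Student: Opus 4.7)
The plan is to combine the Cook--Gerards--Schrijver--Tardos structural description of $P_I$ with classical proximity arguments, working throughout with the facet matrix of the integer hull rather than with~$A$.

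First, I invoke Theorem~\ref{thr:1}: the integer hull admits a facet description $P_I = \{x : Mx \leq t\}$ with $\|M\|_\infty \leq L$, where $L = (n\Delta)^{O(n^2)}$ depends only on $n$ and $\Delta$. By the Hadamard bound, every $n \times n$ subdeterminant of $M$ has absolute value at most $(\sqrt{n}\,L)^n = (n\Delta)^{O(n^3)}$, which matches the target proximity bound exactly.

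Second, setting $R := (n\Delta)^{O(n^3)}$, I consider the restricted linear program $\max\{c^T x : Mx \leq t,\; \|x - y\|_\infty \leq R\}$. Since $y$ is LP-feasible, its optimum $x^*$ satisfies $c^T x^* \geq c^T y$. The constraint matrix of this LP is obtained by stacking $M$ with the box inequalities (whose nonzero entries are $\pm 1$), and so its subdeterminants are still bounded by $(n\Delta)^{O(n^3)}$. The classical CGST proximity theorem then yields an integer point $z \in P \cap \mathbb{Z}^n$ in the restricted region with $c^T z \geq c^T x^* \geq c^T y$ and $\|z - x^*\|_\infty \leq (n\Delta)^{O(n^3)}$. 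The triangle inequality gives $\|z - y\|_\infty \leq R + (n\Delta)^{O(n^3)} = (n\Delta)^{O(n^3)}$, as required.

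The main technical hurdle is verifying the hypotheses of CGST proximity, namely that the restricted IP is \emph{feasible} and that it admits an integer solution with objective value at least $c^T y$. For the first point, I would apply a feasibility version of CGST proximity to the system $Mx \leq t$: since $y$ is LP-feasible and $P \cap \mathbb{Z}^n$ is non-empty, there exists some integer $z_0 \in P \cap \mathbb{Z}^n$ with $\|z_0 - y\|_\infty$ bounded by $n$ times the maximum subdeterminant of $M$, i.e., $(n\Delta)^{O(n^3)}$, so $z_0$ sits inside the box. For the second point, I would invoke a Carathéodory decomposition $y = \sum_{i=1}^{n+1} \lambda_i v_i$ with each $v_i$ an integer vertex of $P_I$; by averaging, some $v_j$ satisfies $c^T v_j \geq c^T y$, and combining this with the proximity argument ensures the CGST bound holds with the required objective guarantee. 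The essential feature that makes every estimate independent of $b$ (and hence of $\|t\|_\infty$) is the structural bound $\|M\|_\infty \leq (n\Delta)^{O(n^2)}$ from the Chvátal rank machinery developed earlier in the paper.
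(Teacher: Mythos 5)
Your first step is exactly the paper's: pass to the facet description $P_I=\{x: Mx\le t\}$ with $\|M\|_\infty\le (n\Delta)^{O(n^2)}$ and hence all subdeterminants of $M$ bounded by $(n\Delta)^{O(n^3)}$. (Small attribution point: this facet bound is Theorem~7 of Cook--Gerards--Schrijver--Tardos, not Theorem~\ref{thr:1} of this paper, which bounds the Chv\'atal rank; deriving $\|M\|_\infty$ from the rank bound via Corollary~\ref{co:1} would only give a doubly exponential bound. Also, you should first reduce to the bounded case, as the paper does, so that the LP optima you use exist.)

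The gap is in the second step. The classical CGST proximity theorem relates an \emph{optimal} LP solution to an \emph{optimal} IP solution; applied to your restricted program it yields a $z$ with $c^Tz$ equal to the IP optimum of $\{Mx\le t,\ \|x-y\|_\infty\le R\}$, and since that restricted polytope is \emph{not} integral (intersecting $P_I$ with a box around a fractional $y$ destroys integrality), this IP optimum can a priori be strictly below $c^Tx^*$ and even below $c^Ty$. So the inequality $c^Tz\ge c^Tx^*$ you write down is not what the theorem gives. What you actually need is: for an \emph{integral} polytope $Q$ and an \emph{arbitrary} point $x^*\in Q$, there is an integer point of $Q$ within $n\cdot s_Q$ of $x^*$ with objective value at least $c^Tx^*$. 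That is precisely Proposition~\ref{prop:ProximityForIntegerPolytope} in the paper, proved by taking the integral LP optimum $z^*$ of $\max\{c^Tx: x\in Q\}$, decomposing $z^*-x^*$ over the cone of directions that preserve feasibility of the relevant constraint groups (all of which are $c$-improving by complementary slackness), and rounding down the Carath\'eodory coefficients to land on $x^{**}=z^*-B\lfloor y\rfloor$. Your two proposed patches do not substitute for this: the ``feasibility version of CGST'' you invoke for $z_0$ is exactly the $c=\bm{0}$ case of the missing proposition (for a general $y\in P$ far from $P_I$ no such $z_0$ need exist, so you are implicitly using integrality of $P_I$ in the way the proposition does), and the Carath\'eodory vertex $v_j$ with $c^Tv_j\ge c^Ty$ may lie far outside your box, with no argument supplied for transporting it back toward $y$ while staying integral, feasible, and at least as good --- that transport is again the cone-decomposition argument. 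So the skeleton (facet bound, subdeterminant bound, then proximity inside $P_I$) is right, but the load-bearing lemma is asserted rather than proved.
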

This bound is a variation of similar statements proven in Cook, Gerards, Schrijver and Tardos~\cite{MR839604}.
For the sake of completness, we give a full proof in the Appendix.
Now we prove the following: 
\begin{theorem}
  Suppose that $(\textsc{4BlockIP})$ is feasible with value $OPT$ and let $\Delta = \max\{ \|V_i\|_{\infty} : i \in [n]\}$. Then in time $f(k,\Delta)$ times a polynomial in the input length one can find a vector $(x^*,y_1^*,\ldots,y_n^*)$ with $c^Tx^* + \sum_{i=1}^n d_i^Ty_i^* \geq OPT$
  that satisfies all constraints except $(*)$. Moreover, the error for constraint $(*)$ is bounded by
  \[
  \big\|(Wx^* + X_1y_1^* + \ldots + X_ny_n^*) - a\big\|_{\infty} \leq g(k,\Delta,\max_{i=1,\ldots,n} \|X_i\|_{\infty})
  \]
\end{theorem}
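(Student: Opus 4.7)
The plan is to form an augmented MILP that extends the one in the proof of Theorem~\ref{thm:2SSIPalgorithm} by the linking constraint~$(*)$, solve it in FPT time (only $k$ integer variables), and then repair the few fractional $y_i$-blocks using per-block proximity. Apply Theorem~\ref{thr:main} to each block polyhedron $P_i(b') = \{y \in \mathbb{R}^k : V_iy=b',\ y \geq 0\}$ to obtain $D \in \mathbb{N}$ and matrices $B_i, C_i$ such that $P_i(b')_I = \{y : B_iy \leq f_i^{r_i}+C_ib'\}$ whenever $b' \equiv r_i \pmod D$. Guess the residue $r \in \{0,\ldots,D-1\}^k$ of an integral optimum $\hat x$ of $({\textsc{4BlockIP}})$ modulo $D$, set $r_i := (b_i - U_ir) \bmod D$ for each $i$, and consider the MILP
\begin{align*}
\max\ & c^Tx + \sum_{i=1}^n d_i^Ty_i \\
\text{s.t.}\ & B_iy_i \leq f_i^{r_i} + C_i(b_i - U_ix) \quad \forall\,i, \\
& Wx + \sum_{i=1}^n X_iy_i = a, \\
& x \equiv r \pmod D,\ x \in \mathbb{Z}_{\geq 0}^k,\ y_i \in \mathbb{R}^k.
\end{align*}
The integral optimum $(\hat x,\hat y_1,\ldots,\hat y_n)$ is feasible for this MILP (each $\hat y_i \in P_i(b_i-U_i\hat x) \cap \mathbb{Z}^k \subseteq P_i(b_i-U_i\hat x)_I$), so its value is at least $OPT$. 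Since only $k$ variables are integer, the MILP can be solved in $f(k,\Delta)$ times a polynomial in the input by combining Lenstra's algorithm with LP solving; we pick an optimum $(x^*, y_1^*, \ldots, y_n^*)$ in which $y^*$ is an extreme point of the LP obtained by fixing $x=x^*$.

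The key structural claim is that at most $k$ of the $y_i^*$ fail to be integer. Let $F_i$ denote the minimal face of $P_i(b_i-U_ix^*)_I$ containing $y_i^*$. The minimal face of the residual LP feasible region $Q := \bigl(\prod_i P_i(b_i-U_ix^*)_I\bigr) \cap \{y : \sum_i X_iy_i = a - Wx^*\}$ containing $y^*$ equals $\bigl(\prod_i F_i\bigr) \cap \{y : \sum_i X_iy_i = a - Wx^*\}$, whose dimension is at least $\sum_i \dim F_i - k$ since the $k$ linking equations remove at most $k$ dimensions. Extremality of $y^*$ forces this dimension to be zero, so $\sum_i \dim F_i \leq k$. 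Blocks with $\dim F_i = 0$ have $y_i^*$ equal to a vertex of the integer hull $P_i(b_i-U_ix^*)_I$ and are therefore integer; hence at most $k$ blocks have a non-integer $y_i^*$.

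For each of the at most $k$ ``bad'' blocks apply Theorem~\ref{thm:ProximityBound} to the $k$-dimensional polytope $P_i(b_i - U_i x^*)$ with objective $d_i$ and the fractional point $y_i^* \in P_i(b_i - U_i x^*)_I$: there is an integer $\tilde y_i \in P_i(b_i - U_i x^*) \cap \mathbb{Z}_{\geq 0}^k$ with $d_i^T \tilde y_i \geq d_i^T y_i^*$ and $\|\tilde y_i - y_i^*\|_\infty \leq (\Delta k)^{O(k^3)}$. For the remaining (integer) blocks set $\tilde y_i := y_i^*$. The vector $(x^*, \tilde y_1, \ldots, \tilde y_n)$ is nonnegative and integer, each per-block constraint $U_ix^* + V_i\tilde y_i = b_i$ holds, the objective value is at least $OPT$, and
\[
\Big\|Wx^* + \sum_{i=1}^n X_i\tilde y_i - a\Big\|_\infty \;=\; \Big\|\sum_{i\text{ bad}} X_i(\tilde y_i - y_i^*)\Big\|_\infty \;\leq\; k \cdot \max_i\|X_i\|_\infty \cdot k \cdot (\Delta k)^{O(k^3)},
\]
which depends only on $k$, $\Delta$, and $\max_i \|X_i\|_\infty$, as required.

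The main obstacle is the structural claim that the $k$ linking equations of $(*)$ can force positive-dimensional faces $F_i$ in at most $k$ of the blocks; everything else is standard, namely the FPT-solvability of mixed integer programs with a bounded number of integer variables and the per-block proximity statement of Theorem~\ref{thm:ProximityBound}.
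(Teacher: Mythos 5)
Your proposal is correct and follows essentially the same route as the paper: guess the residue of $x$ modulo $D$, apply Theorem~\ref{thr:main} blockwise to replace each $P_i(b_i-U_ix)_I$ by its linear description, solve the resulting MILP with only $k$ integer variables, and repair the at most $k$ fractional blocks via Theorem~\ref{thm:ProximityBound}. The only difference is that you spell out the face-dimension count showing $|\{i : y_i^* \notin \mathbb{Z}^k\}| \leq k$, which the paper asserts more tersely by noting that $(*)$ contributes only $k$ extra constraints beyond the integral-vertex product structure; your elaboration is a correct justification of that step.
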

\begin{proof}
  We use a similar argument to Theorem~\ref{thm:2SSIPalgorithm}. Again set $P_i(b') := \{ y_i \in \mathbb{R}^k \mid V_iy_i = b', y_i \geq \bm{0} \}$ for $i=1,\ldots,n$ and $b' \in \mathbb{Z}^k$. Let $(x^{**},y_1^{**},\ldots,y_n^{**})$ be a solution to $({\textsc{4BlockIP}})$ attaining the value of $OPT$. Again we can guess a vector $r$ so that $x^{**} \equiv r \pmod D$
  where $D$ is a parameter so that by Theorem~\ref{thr:main} one has
  \[
    P_i(b')_I = \{ y_i \in \mathbb{R}^k \mid B_iy_i \leq f_i+C_ib'\}
  \]
  for all $b' \in \mathbb{Z}^k$ with $b' \equiv r_i \pmod{D}$ where $r_i \equiv b_i - U_ir \pmod{D}$.
  Consider the mixed integer linear program
\begin{eqnarray*}
  \max c^Tx + \sum_{i=1}^n d_i^Ty_i & & ({\textsc{4BlockMIP}}) \\
  Wx + X_1y_1 + \ldots + X_ny_n &=& a \quad (**) \\
  y_i &\in& P_i(b_i-U_ix)_I  \quad \forall i=1,\ldots,n \\
  x &\equiv& r \pmod{D} \\
                                        x &\in& \mathbb{Z}_{\geq 0}^k 
\end{eqnarray*}
We compute an optimum extreme point solution $(x^*,\tilde{y}_1,\ldots,\tilde{y}_n)$ to $({\textsc{4BlockMIP}})$.
We note that $x^*$ will be integral, but as all vertices of $P_i(\cdot)_I$ are integral and $(**)$
are only $k$ non-trivial constraints, the set $J =\{ i \in [n] \mid \tilde{y}_i \notin \mathbb{Z}^k \}$ has size
$|J| \leq k$. Then by the proximity bound from Theorem~\ref{thm:ProximityBound}, for each $i \in J$, there is a $y_i^*  \in P_i(b_i-U_ix^*)_I \cap \mathbb{Z}^k$ with $\|y_i^*-\tilde{y}_i\|_{\infty} \leq (\Delta k)^{O(k^3)}$ and $d_i^Ty_i^* \geq d_i^T \tilde{y}_i$. We set $y_i^* = \tilde{y}_i$ for all $i \notin J$. Then by the triangle inequality
\[
 \big\|(Wx^* + X_1y_1^* + \ldots + X_ny_n^*) - a\big\|_{\infty} \leq \sum_{i \in J} \|X_i(\tilde{y}_i-y_i^*)\|_{\infty} \leq k \cdot k \cdot (\Delta k)^{O(k^3)} \cdot \max_{i=1,\ldots,n} \|X_i\|_{\infty}.
\]
Hence $(x^*,y_1^*,\ldots,y_n^*)$ satisfies the claim.
\end{proof}

% ----------------------------------
% Thomas: CONJECTURE LIKELY EQUIVALENT TO STANDARD 4-BLOCK. AT LEAST THAT'S THE CASE FOR
% THE FEASIBILITY VARIANT, SEE APPENDIX A IN https://arxiv.org/pdf/2311.01890
%From this result one can motivate the conjecture that the matrices $W$ and $U_i$ do not need to have
%bounded entries: 
%\begin{conjecture}[Strong 4-block conjecture]
%  $(\textsc{4BlockIP})$ can be solved in time $f(k,\Delta)$ times a fixed polynomial in the encoding length
%  where $\Delta$ is the largest absolute value appearing in $V_i,X_i$. 
%\end{conjecture}
% ----------------------------------

\paragraph{Acknowledgements.} The authors would like to thank the anonymous SODA reviewers for their
valuable input.

\bibliography{papers,mybib,books,my_publications,extended}

\appendix

\section{A proximity bound for the integer hull}

A \emph{proximity bound} describes the fact that fractional solutions in a polyhedron cannot be
arbitrarily far away from integer solutions as long as some integer solution exists.
In this section, we fill in the proof for Theorem~\ref{thm:ProximityBound} that we used earlier.
For a vector $x \in \mathbb{R}^n$ we abbreviate $\lfloor x \rfloor := (\lfloor x_1 \rfloor,\ldots,\lfloor x_n \rfloor)$.
% ; similar for $\{ x\}$.
First, we need an auxiliary proximity bound that only applies to integer polytopes. We would like to point
out that the line of arguments used here is almost identical to the one in the proof of \cite[Theorem~1]{MR839604}.
\begin{proposition} \label{prop:ProximityForIntegerPolytope}
  Let $A \in \mathbb{Z}^{m \times n}$, $b \in \mathbb{R}^m$, $c \in \mathbb{R}^n$ and let $P := \{ x \in \mathbb{R}^{n} \mid Ax \leq b\}$ be a bounded polyhedron with 
   $P = P_I$. Then for any $x^* \in P$  there is an $x^{**} \in P \cap \mathbb{Z}^n$ with $c^Tx^{**} \geq c^Tx^*$ and $\|x^{**}-x^*\|_{\infty} \leq n \cdot s$ where $s$ is the largest subdeterminant of $A$.
\end{proposition}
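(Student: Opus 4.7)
\emph{Proof plan.} I plan to adapt the classical tangent-cone rounding argument of Cook, Gerards, Schrijver and Tardos~\cite{MR839604}, exploiting the hypothesis $P = P_I$ to avoid splitting the analysis into an LP side and an IP side. First, I would choose a vertex $v^{*}$ of $P$ that maximizes $c^{T}x$ over $P$; since $P$ is bounded and $P = P_I$, such a $v^{*}$ exists, is integer, and satisfies $c^{T}v^{*} \geq c^{T}x^{*}$. If $v^{*} = x^{*}$ we are already done, so assume otherwise.

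Next, let $J$ index the rows of $A$ tight at $v^{*}$ and let $K := \{d \in \mathbb{R}^{n} : A_{J}d \leq 0\}$ be the tangent cone of $P$ at $v^{*}$. Since both $v^{*}$ and $x^{*}$ lie in $P$, one has $x^{*} - v^{*} \in K$. I would triangulate $K$ into simplicial subcones and pick one containing $x^{*} - v^{*}$, with extreme rays $d_{1}, \ldots, d_{n}$. By Cramer's rule, each $d_{i}$ can be scaled to a primitive integer vector whose entries are (up to sign) $(n-1) \times (n-1)$ subdeterminants of $A$, so $\|d_{i}\|_{\infty} \leq s$. Writing $x^{*} - v^{*} = \sum_{i=1}^{n} \mu_{i} d_{i}$ with $\mu_{i} \geq 0$ and noting that the LP-optimality of $v^{*}$ forces $c^{T}d \leq 0$ for every $d \in K$, I would set
\[
x^{**} \;:=\; v^{*} + \sum_{i=1}^{n} \lfloor \mu_{i} \rfloor d_{i} \;\in\; \mathbb{Z}^{n}.
\]
Three of the required properties are then immediate: the objective satisfies $c^{T}x^{**} \geq c^{T}v^{*} + \sum_{i}\mu_{i} c^{T}d_{i} = c^{T}x^{*}$ because $\lfloor \mu_{i} \rfloor \leq \mu_{i}$ and $c^{T}d_{i} \leq 0$; the distance bound $\|x^{**} - x^{*}\|_{\infty} = \|\sum_{i}(\mu_{i}-\lfloor\mu_{i}\rfloor)d_{i}\|_{\infty} \leq n \cdot s$ follows from $\mu_{i}-\lfloor\mu_{i}\rfloor < 1$ and $\|d_{i}\|_{\infty} \leq s$; and for any tight row $j \in J$, $A_{j}x^{**} = A_{j}v^{*} + \sum_{i} \lfloor \mu_{i} \rfloor A_{j}d_{i} \leq b_{j}$ because $A_{j}d_{i} \leq 0$ and $\lfloor \mu_{i} \rfloor \geq 0$.

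The main obstacle is feasibility of $x^{**}$ with respect to the rows $j \notin J$ that are \emph{not} tight at $v^{*}$: for such rows the sign of $A_{j}d_{i}$ is unconstrained, so rounding $\mu_{i}$ down does not automatically preserve $A_{j}x \leq b_{j}$. To deal with this I would replace the plain round-down by a descent argument in the spirit of \cite{MR839604}: among all vectors $v^{*} + \sum_{i} k_{i} d_{i}$ with $k_{i} \in \mathbb{Z}_{\geq 0}$ that lie in $P$ and satisfy $c^{T}(\cdot) \geq c^{T}x^{*}$, pick one minimizing $\|\cdot - x^{*}\|_{1}$; if this distance exceeded $n \cdot s$, then, because the ``rounding error'' $\sum_{i}(\mu_{i}-\lfloor\mu_{i}\rfloor)d_{i}$ has $\ell_{\infty}$-norm at most $n \cdot s$ and the $d_{i}$'s are integer, a small integer adjustment among the $d_{i}$'s produces a feasible candidate strictly closer to $x^{*}$, contradicting minimality. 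The hypothesis $P = P_{I}$ is used crucially here: the LP and IP optima over $P$ coincide, so the same extreme-ray decomposition governs both feasibility and the objective, and one avoids the two-sided exchange used in the general case of \cite{MR839604}.
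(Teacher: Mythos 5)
Your construction gets the objective bound, the distance bound, and feasibility of the constraints tight at $v^{*}$, but the remaining step --- feasibility of $x^{**}$ for the rows not tight at $v^{*}$ --- is exactly the hard part of the proposition, and the descent argument you offer for it is not a proof. You assert that if the $\ell_{1}$-minimizer among feasible candidates $v^{*}+\sum_i k_i d_i$ were farther than $n\cdot s$ from $x^{*}$, then ``a small integer adjustment among the $d_i$'s produces a feasible candidate strictly closer to $x^{*}$,'' but no mechanism is given for why such an adjustment preserves membership in $P$: changing some $k_i$ by $\pm 1$ moves $A_j x$ by $\pm A_j d_i$, and for rows $j \notin J$ the signs of the $A_j d_i$ are mixed, which is precisely the obstruction you identified one paragraph earlier. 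The minimizer could in principle be $v^{*}$ itself, and the claim that it must lie within $n \cdot s$ of $x^{*}$ is essentially a restatement of the proposition. (Note also that the candidate set is restricted to $k_i \geq 0$, so you cannot even move freely in both directions along each ray.)

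The paper resolves this with a different choice of cone. Instead of the tangent cone at the optimal vertex (defined by which constraints are \emph{tight}), it takes $z^{*}$ an integral optimum and splits the rows of $A$ by the \emph{sign} of $A(z^{*}-x^{*})$: the cone is $C = \{x : A'x \geq \bm{0},\ A''x \leq \bm{0}\}$, which contains $z^{*}-x^{*}$ by construction. Writing $z^{*}-x^{*} = By$ with $B$ the (integral, $\|B\|_\infty \leq s$) generator matrix and $|\mathrm{supp}(y)| \leq n$, the rounded point $x^{**} = z^{*} - B\lfloor y \rfloor = x^{*} + B(y - \lfloor y\rfloor)$ is checked against the $A'$-rows starting from $z^{*}$ (subtracting nonnegative multiples of columns with $A'B^{j} \geq \bm{0}$) and against the $A''$-rows starting from $x^{*}$ (adding nonnegative multiples of columns with $A''B^{j} \leq \bm{0}$); complementary slackness then gives $c^{T}x \geq 0$ on $C$, hence $c^{T}x^{**} \geq c^{T}x^{*}$. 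This two-sided verification, with each row group anchored at a different endpoint, is the idea your proposal is missing; to complete your argument you would have to either import it or supply a genuine exchange/descent lemma in its place.
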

\begin{proof}
  Since $P$ is integral and bounded, there is an integral optimum solution $z^*$ to the LP $\max\{ c^Tx \mid x \in P\}$.
  By optimality, $c^Tz^* \geq c^Tx^*$.
  
We split the matrix $A$ into $A'$ and $A''$ so that
  \[
   A'(z^*-x^*) \geq \bm{0} \quad \textrm{and} \quad A''(z^*-x^*) < \bm{0}
 \]
 We also split  $b = (b',b'')$ accordingly.
 Consider the cone
 \[
   C := \big\{ x \in \mathbb{R}^n \mid A'x \geq \bm{0}\textrm{ and }A''x \leq \bm{0}\big\}
 \]
 Let $y$ be an optimum dual solution, i.e. $y^TA = c^T$, $y^Tb = c^Tz^*$ and $y \geq \bm{0}$ (see e.g. \cite[Cor~7.1g]{Schrijver86}).
 By complementary slackness (see e.g.~\cite[Section~7.9]{Schrijver86}), the solution $y$ cannot use any row where $z^*$ has slack, i.e. $y'' = \bm{0}$
 and hence $(y')^TA' = c^T$.
 For any $x \in C$ we can verify that
 \[
   c^Tx = \underbrace{(y')^T}_{\geq \bm{0}}\underbrace{A'x}_{\geq \bm{0}} \geq 0
 \]
 In other words, $C$ contains only directions in which the objective function improves (though the cone may not contain all such directions).
 
 Let $B \in \mathbb{R}^{n \times k}$ be the matrix whose columns are the generators of the cone, i.e.  $C = \textrm{cone}(B)$. Using Cramer's rule, the matrix $B$ can be chosen to be
 integral with $\|B\|_{\infty} \leq s$. We note that by construction  $z^*-x^* \in C$.  By Carath\'eodory's Theorem (\cite[Cor~7.1i]{Schrijver86}), there is a $y \in \mathbb{R}_{\geq 0}^k$ with $z^*-x^* = By$ and $|\textrm{supp}(y)| \leq n$. We want to prove that
 \[
  x^{**} := x^* + B (y-\lfloor y \rfloor ) = z^* - B\lfloor y \rfloor 
 \]
 satisfies the claim. Clearly $x^{**} \in \mathbb{Z}^n$ because $z^*$, $B$ and $\lfloor y \rfloor$ are integral.
 As for the distance to $x^*$ we have that
 \[
 \|x^{**}-x^*\|_{\infty} = \|B(y-\lfloor y \rfloor )\|_{\infty} \leq \sum_{j \in \textrm{supp}(y)} \underbrace{\|B^j\|_{\infty}}_{\leq s} \cdot \underbrace{(y_j-\lfloor y_j \rfloor )}_{\leq 1} \leq n ⋅ s
 \]
 as $|\textrm{supp}(y)| \leq n$, where $B^j$ denotes the $j$-th column of $B$. It remains to show that $x^{**} \in P$. For the first group of constraints
 we have
 \[
   A'x^{**} = \underbrace{A'z^*}_{\leq b'} - \sum_{j=1}^k \underbrace{\lfloor y_j\rfloor}_{\geq 0} \cdot \underbrace{A'B^j}_{\geq \bm{0}} \leq b'
 \]
 as the columns of $B$ are in the cone $C$.
 For the second group of constraints we have
 \[
 A''x^{**} = \underbrace{A''x^*}_{\leq b''} + \sum_{j=1}^k \underbrace{(y_j-\lfloor y_j\rfloor)}_{\geq 0} \cdot \underbrace{A''B^j}_{\leq \bm{0}} \leq b''
 \]
 Hence $Ax^{**} \leq b$ and the claim is proven.
\end{proof}

Now we can restate and prove Theorem~\ref{thm:ProximityBound}.
\begin{theorem*}[Proximity bound --- Theorem~\ref{thm:ProximityBound}] 
  Let $P = \{ x \in \mathbb{R}^n \mid Ax \leq b\}$ where $A \in \mathbb{Z}^{m \times n}$, $b \in \mathbb{R}^m$ and $\|A\|_{\infty} \leq \Delta$ for $\Delta \in \mathbb{N}$.
  Then for any $y \in P_I$ and $c \in \mathbb{R}^n$ there is a $z \in P \cap \mathbb{Z}^n$ with $c^Tz \geq c^Ty$
  and $\|z-y\|_{\infty} \leq (\Delta n)^{O(n^3)}$.
\end{theorem*}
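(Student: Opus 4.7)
The plan is to deduce the theorem from Proposition~\ref{prop:ProximityForIntegerPolytope}, applied not to $P$ itself but to its integer hull $P_I$. Since $P_I \subseteq P$ and $P \cap \mathbb{Z}^n = P_I \cap \mathbb{Z}^n$, a point $z \in P_I \cap \mathbb{Z}^n$ with $c^Tz \geq c^Ty$ is automatically a valid candidate for the conclusion. So, viewing $P_I$ as an integer polytope and taking $y \in P_I$ as the fractional input point $x^*$ of Proposition~\ref{prop:ProximityForIntegerPolytope}, the statement follows once we have an integer inequality description $P_I = \{x : Bx \leq t\}$ whose largest subdeterminant we can bound by $(n\Delta)^{O(n^3)}$.

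To get such a description with a good bound on $\|B\|_\infty$, I would invoke the classical facet-normal bound of Cook, Gerards, Schrijver and Tardos~\cite{MR839604}: $P_I$ admits an integer description with $\|B\|_\infty \leq n^{2n} s^n$, where $s$ denotes the largest subdeterminant of $A$. By Hadamard's inequality, $s \leq (\sqrt{n}\Delta)^n$, so that $\|B\|_\infty \leq (n\Delta)^{O(n^2)}$. A second application of Hadamard to $B$ bounds its largest subdeterminant by $(\sqrt{n}\|B\|_\infty)^n \leq (n\Delta)^{O(n^3)}$. Substituting this into Proposition~\ref{prop:ProximityForIntegerPolytope} applied to $P_I$ at the point $x^* := y$ gives an integer $z \in P_I$ with $c^Tz \geq c^Ty$ and $\|z-y\|_\infty \leq n \cdot (n\Delta)^{O(n^3)} = (n\Delta)^{O(n^3)}$, which matches the target bound.

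The main obstacle, and the step I expect to require the most care, is that Proposition~\ref{prop:ProximityForIntegerPolytope} is stated only for \emph{bounded} integer polyhedra, while $P_I$ may well be unbounded. Inspecting the proof of Proposition~\ref{prop:ProximityForIntegerPolytope}, the only use of boundedness is to guarantee the existence of an integer optimum $z^*$ of $\max\{c^Tx : x \in P_I\}$; the cone construction and the rounding argument $x^{**} = z^* - B\lfloor y \rfloor$ then go through verbatim. Hence it suffices to handle two sub-cases separately: if this LP is bounded, an integer vertex optimum exists because $P_I$ is integral, and we are done directly; if it is unbounded, then there is an integer direction $d \in \mathrm{rec}(P_I) \cap \mathbb{Z}^n$ with $c^Td > 0$, whose norm can be bounded by Hadamard on $A$, and one obtains the required $z$ by starting from any integer point in $P$ close to $y$ (which exists by the elementary, objective-free proximity bound, itself a consequence of the same cone argument) and adding a controlled number of copies of $d$ to achieve $c^Tz \geq c^Ty$. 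Tracking the norm increments in either sub-case keeps the final bound at $(n\Delta)^{O(n^3)}$.
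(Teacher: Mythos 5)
Your main line of argument is exactly the paper's: describe $P_I$ by an integer system with $\|B\|_\infty \leq n^{2n}s^n \leq (n\Delta)^{O(n^2)}$ via Cook--Gerards--Schrijver--Tardos, bound its subdeterminants by Hadamard by $(n\Delta)^{O(n^3)}$, and feed $P_I$ (which is integral) together with $x^* := y$ into Proposition~\ref{prop:ProximityForIntegerPolytope}. That part is correct and is what the paper does.

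The divergence, and the gap, is in how you handle unboundedness. Your bounded-LP sub-case is fine: if $\max\{c^Tx : x \in P_I\}$ is finite, the optimum is attained on a face that contains an integer point, and the cone argument of the Proposition goes through. But in the unbounded sub-case, the recipe ``take an integer point $z_0$ near $y$ and add a controlled number of copies of a recession direction $d$ with $c^Td>0$'' does not work: the number of copies needed is at least $(c^Ty - c^Tz_0)/(c^Td)$, and while $\|y - z_0\|_\infty$ is bounded in terms of $n$ and $\Delta$, the deficit $c^T(y-z_0)$ can be as large as $\|c\|_1\cdot\|y-z_0\|_\infty$ while $c^Td$ can be arbitrarily small relative to $\|c\|_1$ (e.g.\ $c=(1,M)$ with $d=(1,0)$ and $z_0$ losing $1/2$ in the second coordinate forces roughly $M/2$ steps). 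So the resulting $z$ drifts arbitrarily far from $y$ and the claimed $(n\Delta)^{O(n^3)}$ bound is lost; the improvement has to be recovered in a direction adapted to where $z_0$ lost value, which is precisely what the optimum-plus-complementary-slackness argument in the Proposition provides and what a single recession ray cannot. The paper avoids the case split entirely: by Carath\'eodory, $y$ lies in the convex hull of at most $n+1$ integer points of $P$, and intersecting $P$ with a bounding box containing these points yields a bounded polyhedron $\tilde P$ with $y \in \tilde P_I$, with $\|\cdot\|_\infty$-norm and subdeterminant bounds unchanged, to which the Proposition applies directly. Replacing your unbounded sub-case by this truncation repairs the proof.
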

\begin{proof}
  We may assume that $P$ is bounded --- if not then intersect $P$ with any large enough bounding box that incudes
  the at most $n+1$ integer points that have $y$ in its convex hull.  
  Let $s$ be the largest subdeterminant of $A$. Recall that by the Hadamard bound one has  $s \leq (\Delta n)^n$.
  We can write $P_I = \{ x \in \mathbb{R}^n \mid A'x \leq b'\}$ with $A'$ integer and $\|A'\|_{\infty} \leq n^{2n}s^n \leq (\Delta n)^{O(n^2)}$, see \cite[Theorem~7]{MR839604}.
  The largest subdeterminant of $A'$ is $s' \leq (\|A'\|_{\infty}n)^n \leq (\Delta n)^{O(n^3)}$.
  As $y \in P_I$, Prop~\ref{prop:ProximityForIntegerPolytope} provides a $z \in P_I \cap \mathbb{Z}^n$ with $c^Tz \geq c^Ty$ and $\|z-y\|_{\infty} \leq n \cdot s' \leq (\Delta n)^{O(n^3)}$. This concludes the claim.
\end{proof}

\end{document}